\newcommand{\eqdef}{\mathbin{\stackrel{\mathrm{def}}{=}}}
\def\cI{{\cal I}}
\def\cIc{{\cal I}_{\, c}}
\def\down{\downarrow}
\def\up{\uparrow}
\def\L{{\cal L}}
\newcommand\non[1]{\overline{#1} }
\newcommand{\den}[1]{\llbracket \, #1 \, \rrbracket}
\def\Not{\text{not} \ }
\def\qed{~\hfill$\Box$}
\newtheorem{definition}{Definition}
\newtheorem{theorem}{Theorem}
\newtheorem{example}{Example}
\newtheorem{lemma}{Lemma}
\newtheorem{corollary}{Corollary}
\newtheorem{proposition}{Proposition}
\newcommand{\str}[1]{\underline{#1}}
\def\eqmodels{\ |\!\!\!\sim}
\title[Theory and Practice of Logic Programming]
      {A Denotational Semantics \\ for Equilibrium Logic \thanks{This research was partially supported by Spanish MEC project TIN2013-42149-P.}}
\author[F. Aguado et al]
         {Felicidad Aguado$^1$, Pedro Cabalar$^1$, David Pearce$^2$, \\ 
         {\rm \normalsize Gilberto P\'erez$^1$ and Concepci\'on Vidal$^1$}\\ \\
         $^1$ Department of Computer Science\\
         University of Corunna, SPAIN\\
         \email{\{aguado,cabalar,gperez,eicovima\}@udc.es}\\
\\
         $^2$ Universidad Polit\'ecnica de Madrid, SPAIN\\
         \email{david.pearce@upm.es}
		 }
\begin{document}

\label{firstpage}

\maketitle

\begin{abstract}
In this paper we provide an alternative semantics for Equilibrium Logic and its monotonic basis, the logic of Here-and-There (also known as G\"odel's $G_3$ logic) that relies on the idea of \emph{denotation} of a formula, that is, a function that collects the set of models of that formula. Using the three-valued logic $G_3$ as a starting point and an ordering relation (for which equilibrium/stable models are minimal elements) we provide several elementary operations for sets of interpretations. By analysing structural properties of the denotation of formulas, we show some expressiveness results for $G_3$ such as, for instance, that conjunction is not expressible in terms of the other connectives. Moreover, the denotational semantics allows us to capture the set of equilibrium models of a formula with a simple and compact set expression. We also use this semantics to provide several formal definitions for entailment relations that are usual in the literature, and further introduce a new one called \emph{strong entailment}. We say that $\alpha$ strongly entails $\beta$ when the equilibrium models of $\alpha \wedge \gamma$ are also equilibrium models of $\beta \wedge \gamma$ for any context $\gamma$. We also provide a characterisation of strong entailment in terms of the denotational semantics, and give an example of a sufficient condition that can be applied in some cases.
\end{abstract}

\begin{keywords}
Answer Set Programming, Equilibrium Logic
\end{keywords}

\section{Introduction}

In the last 15 years, the paradigm of Answer Set Programming (ASP)~\cite{MT99,Nie99,BET11} has experienced a boost in practical tools and applications that has come in parallel with a series of significant results in its theoretical foundations. Focusing on the latter, a long way has been traversed since the original definition of the \emph{stable models} semantics~\cite{GL88} for normal logic programs, until the current situation where stable models constitute a complete non-monotonic approach for arbitrary theories in the syntax of First Order Logic~\cite{PV04,FLL07}. An important breakthrough that undoubtfully contributed to this evolution was the characterization of stable models in terms of \emph{Equilibrium Logic}~\cite{Pea96,Pea06}, allowing a full coverage of arbitrary propositional theories and inspiring a new definition of program reduct for that syntax~\cite{Fer05}. Equilibrium Logic is defined in terms of a model minimisation criterion for an intermediate logic called the \emph{Logic of Here-and-There} (HT) first introduced in~\cite{Hey30} and, shortly after, reappeared in~\cite{God32} as G\"odel's three-valued logic $G_3$. In~\cite{LPV01} it was shown that equivalence in HT was a necessary and sufficient condition for the property of \emph{strong equivalence}, that is, that two programs yield the same stable/equilibrium models regardless of the context in which they may be included. After that, many theoretical results have followed from the use of Equilibrium Logic and HT, such as the study of variants of strong equivalence~\cite{PV04b,Wol08} or the series of papers considering different forms of strongly equivalent transformations~\cite{CPV05,CF07,CPV07}. Besides, Equilibrium Logic allowed the already mentioned extension to first order syntax~\cite{PV04}, engendering an extensive literature, as well as many other extensions such as the inclusion of a strong negation operator~\cite{OP05} or new formalisms such as \emph{Partial Equilibrium Logic}~\cite{COPV07b}, \emph{Temporal Equilibrium Logic}~\cite{ACD+13} or, more recently, \emph{Infinitary Equilibrium Logic}~\cite{HLPV14}.

All these contributions provide results about HT or Equilibrium Logic that are proved with meta-logical textual descriptions. These proofs lack a common formal basis on which meta-properties of HT and Equilibrium Logic can be mathematically or even automatically checked. Another interesting observation is that many of these theoretical results in the literature use the concept of \emph{sets of models} of different types: classical models, HT models, equilibrium models, etc. It is, therefore, natural to wonder whether a formal treatment of sets of interpretations could help in the development of fundamental results for Equilibrium Logic and ASP.

In this paper we explore the idea of characterising HT (or $G_3$) and Equilibrium Logic using the concept of \emph{denotation} of a formula. Given a formula $\alpha$, its denotation $\den\alpha$ collects the set of $G_3$ models of $\alpha$ and can be described as a compositional function, that is, the denotation of a formula is a function of the denotations of its subformulas. Since their introduction by~\cite{SS71}, denotations constitute a common device for defining the semantics of programming languages, although their use for non-classical logics is also frequent -- a prominent case, for instance, is the semantics of $\mu$-Calculus~\cite{Koz83}. The use of denotational semantics in Logic Programming is not so common: in the case of Prolog we can mention~\cite{NF89} but for ASP, to the best of our knowledge, no attempt has previously been made.

Here we explain how the denotational semantics actually constitutes an alternative description of HT/$G_3$ and provides several interesting features. We define some elementary operations on sets of interpretations and the ordering relation used in the equilibrium models minimisation. Using those elementary set operations and analysing structural properties of the denotation of formulas, we derive some expressivity results for $G_3$ such as, for instance, that conjunction is not expressible in terms of implication, falsum and disjunction. More importantly, we are able to capture the equilibrium models of a formula as a set expression constituting a subset of $\den\alpha$. This allows us to study properties of equilibrium models by using formal results from set theory, something that in many cases is more compact than an informal proof in natural language and, moreover, has allowed us to use a theorem prover for a semi-automated verification (see the sequel~\cite{CMMSE15} of the current paper).

As an application of the denotational semantics, we provide several definitions (in terms of denotations) for entailment relations foundl in the literature, and further introduce a new one called \emph{strong entailment}. We say that $\alpha$ strongly entails $\beta$ when the equilibrium models of $\alpha \wedge \gamma$ are also equilibrium models of $\beta \wedge \gamma$ for any context $\gamma$. This obviously captures one of the directions of strong equivalence. We also provide the corresponding denotational characterisation for this new strong entailment and give an example of a sufficient condition that can be applied in some cases.

The rest of the paper is organised as follows. In Section~\ref{sec:g3} we provide the basic definitions of G\"odel's $G_3$ logic that, as explained, is an equivalent formulation of HT. In Section~\ref{sec:sets} we describe several useful operators on sets of interpretations that we then use in Section~\ref{sec:zero} to define the denotational semantics for $G_3$ and for equilibrium models. After describing some applications of this semantics, Section~\ref{sec:entail} defines different types of entailments and, in particular, presents the idea of strong entailment together with its denotational charaterisation and some examples. Finally, Section~\ref{sec:conc} concludes the paper. Most proofs have been collected in the on line appendix (Appendix A).

%%%%%%%%%%%%%%%%%%%%%%%%%%%%%%%%%%%%%%%%%%%%%%%%%%%%%%%%%%%%%%%%%%%%%%%%%%%%%%%%%%%%%%%%%%%
\section{G\"odel's three-valued logic $G_3$ and equilibrium models}
\label{sec:g3}
We describe next the characterisation of Equilibrium Logic in terms of G\"odel's three-valued logic -- for further details on multi-valued characterisations of Equilibrium Logic, see~\cite{Pea06}, section 2.4. \\
We start from a finite set of atoms $\Sigma$ called the \emph{propositional signature}. A \emph{formula} $\alpha$ is defined by the grammar: 
\[
\alpha ::= \bot \mid  p \mid \alpha_1 \wedge \alpha_2 \mid \alpha_1 \vee \alpha_2 \mid \alpha_1 \rightarrow \alpha_2
\]

\noindent where $\alpha_1$ and $\alpha_2$ are formulas in their turn and $p \in \Sigma$ is any atom. We define the derived operators $\neg \alpha \eqdef \alpha \rightarrow \bot$ and $\top \eqdef \neg \bot$. By $\L_{\Sigma}$ we denote the language of all well-formed formulas for signature $\Sigma$ and just write $\L$ when the signature is clear from the context.

A \emph{partial} (or \emph{three-valued}) interpretation is a mapping $v: \Sigma \rightarrow \{0,1,2\}$ assigning $0$ (false), $2$ (true) or $1$ (undefined) to each atom $p$ in the signature $\Sigma$. A partial interpretation $v$ is said to be \emph{classical} (or \emph{total}) if $v(p)\neq 1$ for every atom $p$. We write $\cI$ and $\cIc$ to stand for the set of all partial and total interpretations, respectively (fixing signature $\Sigma$). Note that $\cIc \subseteq \cI$.

For brevity, we will sometimes represent interpretations by (underlined) strings of digits from $\{0,1,2\}$ corresponding to the atom values, assuming the alphabetical ordering in the signature. Thus, for instance, if $\Sigma=\{p,q,r\}$, the interpretation $v=\str{102}$ stands for $v(p)=1$, $v(q)=0$ and $v(r)=2$.

Given any partial interpretation $v \in \cI$ we define a classical interpretation $v_t \in \cIc$ as:
\begin{eqnarray*}
v_t(p) & \eqdef & \left\{
\begin{array}{r@{\ \ }l}
2 & \hbox{if } v(p)=1 \\
v(p) & \hbox{otherwise}
\end{array}
\right.
\end{eqnarray*}
\noindent In other words, $v_t$ is the result of transforming all $1$'s in $v$ into $2$'s. For instance, given $v'=\str{1021}$ for signature $\Sigma=\{p,q,r,s\}$, then $v'_t=\str{2022}$.

\begin{definition}[Valuation of formulas]\label{def:val}
Given a partial interpretation $v \in \cI$ we define a corresponding \emph{valuation of formulas}, a function also named $v$ (by abuse of notation) of type $v: \L \rightarrow \{0,1,2\}$ and defined as:
\[
\begin{array}{c@{\hspace{30pt}}c}
\begin{array}{rcl}
v(\alpha \wedge \beta) & \eqdef & \min(v(\alpha), v(\beta))\\[5pt]
v(\alpha \vee \beta) & \eqdef & \max(v(\alpha), v(\beta)) 
\end{array}
&
\begin{array}{rcl}
v(\bot) & \eqdef & 0 \\[5pt]
v(\alpha \rightarrow \beta) & \eqdef & \left\{
\begin{array}{r@{\ \ }l}
2 & \hbox{if } v(\alpha) \leq v(\beta) \\
v(\beta) & \hbox{otherwise} \hspace{14pt} \Box
\end{array}
\right.
\end{array}
\end{array}
\]
\end{definition}

%The interpretation for implication can also be summarized with the following table:
%\[
%\begin{array}{r|ccc}
%\alpha \setminus \beta  & 0 & 1 & 2\\
%\hline 
%0 & 2 & 2 & 2 \\ 
%1 & 0 & 2 & 2 \\
%2 & 0 & 1 & 2
%\end{array}
%\]

From the definition of negation, it is easy to see that $v(\neg \alpha)=2$ iff $v(\alpha)=0$, and $v(\neg \alpha)=0$ otherwise.
We say that $v$ \emph{satisfies} $\alpha$ when $v(\alpha)=2$. We say that $v$ is a \emph{model} of a theory $\Gamma$ iff $v$ satisfies all the formulas in $\Gamma$.

\begin{example}\label{ex:1}
As an example, looking at the table for implication, the models of the formula:
\begin{eqnarray}
\neg p \rightarrow q \label{f1}
\end{eqnarray}
\noindent are those where $v(\neg p)=0$ or $v(q)=2$ or both $v(\neg p)=v(q)=1$. The latter is impossible since the evaluation of negation never returns $1$, whereas $v(\neg p)=0$ means $v(p)\neq 0$. Therefore, we get $v(p) \neq 0$ or $v(q)=2$ leading to the following 7 models $\str{10}, \str{11}, \str{12}, \str{20}, \str{21}, \str{22}, \str{02}$.  \qed
\end{example}

Given two 3-valued interpretations $u,v$, we say that $u \leq v$ when, for any atom $p \in \Sigma$, the following two conditions hold: $u(p) \leq v(p)$; and $u(p)=0$ implies $v(p)=0$. As usual, we write $u < v$ to stand for both $u\leq v$ and $u\neq v$. An equivalent, and perhaps simpler, way of understanding $u \leq v$ is that we can get $v$ by switching some $1$'s in $u$ into $2$'s. This immediately means that classical interpretations are $\leq$-maximal, because they contain no $1$'s. Moreover, since $u_t$ is the result of switching \emph{all} $1$'s in $u$ into $2$'s, we easily conclude $u \leq u_t$ for any $u$. As an example of how $\leq$ works, among models of \eqref{f1}, we can check that $\str{10} < \str{20}$ and that $\str{11}$, $\str{12}$ and $\str{21}$ are strictly smaller than $\str{22}$. On the other hand, for instance, $\str{10}$, $\str{02}$ or $\str{12}$ are all pairwise incomparable.

Once we have defined an ordering relation among interpretations, we can define the concept of \emph{equilibrium model} as a $\leq$-minimal model that is also classical.

\begin{definition}[Equilibrium model]
A classical interpretation $v \in \cIc$ is an \emph{equilibrium model} of a theory $\Gamma$ iff it is a $\leq$-minimal model of $\Gamma$.\qed
\end{definition}

Back to the example \eqref{f1}, from the 7 models we obtained, only three of them $\str{20}, \str{22}$ and $\str{02}$ are classical (they do not contain $1$'s). However, as we saw, $\str{20}$ is not $\leq$-minimal since $\str{10}<\str{20}$ and the same happens with $\str{22}$, since $\str{11}, \str{12}, \str{21}$ are strictly smaller too. The only $\leq$-minimal classical model is $\str{02}$, that is, $p$ false and $q$ true, which becomes the unique equilibrium model of \eqref{f1}. Equilibrium models coincide with the most general definition of stable models, for the syntax of arbitrary (propositional) formulas~\cite{Fer05}. Indeed, we can check that model $\str{02}$ coincides with the only stable model of the ASP rule $(q \leftarrow \Not p)$ which is the usual rewriting of formula \eqref{f1} in ASP syntax. 

%%%%%%%%%%%%%%%%%%%%%%%%%%%%%%%%%%%%%%%%%%%%%%%%%%%%%%%%%%%%%%%%%%%%%%%%%%%%%%%%%%%%%
\section{Sets of interpretations}\label{sec:sets}

In this section we will introduce some useful operations on sets of interpretations. Some of them depend on the partial ordering relation $\leq$. Given a set of interpretations $S\subseteq \cI$ we will define the operations:
\[
\begin{array}{rcl@{\hspace{50pt}}rcl}
\non{S} & \eqdef & \cI \setminus S &
S \down & \eqdef & \{u \in \cI \ : \ \hbox{there exists } v \in S, v \geq u\} \\
S_c & \eqdef & \cIc \cap S &
S \up & \eqdef & \{u \in \cI \ : \ \hbox{there exists } v \in S, v \leq u\} 
\end{array}
\]
To avoid too many parentheses, we will assume that $\down$, $\up$ and $c$ have more priority than standard set operations $\cup$, $\cap$ and $\setminus$. As usual, we can also express set difference $S \setminus S'$ as $S \cap \non{S'}$. We can easily check that the $c$ operation distributes over $\cap$ and $\cup$, whereas $\down$ and $\up$ distribute over $\cup$. For intersection, we can only prove that:

\begin{proposition}\label{prop:distr} \zlabel{prop:distr}
For any pair $S, S'$ of sets of interpretations: \\
$(S \cap S')\up \ \ \subseteq \ S\up \cap \ S'\up$ and $(S \cap S')\down \ \ \subseteq \ S\down \cap \ S'\down$.\qed
\end{proposition}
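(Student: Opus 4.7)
The plan is to prove both inclusions directly from the definitions of $\up$ and $\down$, by observing that a single witness in $S \cap S'$ simultaneously witnesses membership in both $S\up$ (resp.\ $S\down$) and $S'\up$ (resp.\ $S'\down$). So the strategy is element-chasing, not structural.

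For the first inclusion, I would take an arbitrary $u \in (S \cap S')\up$. By definition of $\up$, there exists some $v \in S \cap S'$ with $v \leq u$. The key observation is that this same $v$ serves as a witness twice: since $v \in S$ and $v \leq u$, we have $u \in S\up$; and since $v \in S'$ and $v \leq u$, we also have $u \in S'\up$. Hence $u \in S\up \cap S'\up$.

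The argument for the second inclusion is completely symmetric. Given $u \in (S \cap S')\down$, pick a witness $v \in S \cap S'$ with $v \geq u$ (which exists by definition of $\down$). Then $v \in S$ with $v \geq u$ gives $u \in S\down$, and $v \in S'$ with $v \geq u$ gives $u \in S'\down$, so $u \in S\down \cap S'\down$.

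There is essentially no obstacle here; the proof is a one-line unfolding of definitions, and the statement is only interesting because the reverse inclusion fails in general (one could build witnesses showing that a given $u$ lies above some element of $S$ and above some element of $S'$ without any single element of $S \cap S'$ lying below $u$, e.g.\ taking $S = \{v_1\}$, $S' = \{v_2\}$ with $v_1, v_2 \leq u$ but $v_1 \neq v_2$, so $S \cap S' = \emptyset$). I might mention this informally after the proof to motivate why the proposition is stated as an inclusion rather than an equality.
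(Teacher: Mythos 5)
Your proof is correct and is the standard element-chasing argument this proposition admits: the single witness $v \in S \cap S'$ below (resp.\ above) $u$ serves simultaneously for both $S$ and $S'$. This matches the paper's treatment, including your closing remark about the failure of the reverse inclusion, which is exactly the counterexample the paper gives immediately after the proposition ($S=\{\str{12}\}$, $S'=\{\str{21}\}$, $u=\str{22}$).
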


\noindent In the general case, the other direction does not hold. As a simple example, for signature $\Sigma=\{p,q\}$, take $S=\{\str{12}\}$ and $S'=\{\str{21}\}$. Then $S\up = \{\str{12},\str{22}\}$ and $S'\up =\{\str{21},\str{22}\}$ and thus $S\up \cap \ S'\up=\{\str{22}\}$ but $(S\cap S')\up = \emptyset$.

With these new operators we can formally express that $v_t$ is the only classical interpretation greater than or equal to $v$ in the following way:

\begin{proposition}\label{prop:vt} \zlabel{prop:vt}
For any $v \in \cI$ it holds that $\{v\} \up_c \ = \{ v_t \}$. \qed
\end{proposition}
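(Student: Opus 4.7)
The plan is to unfold the definitions and prove the two inclusions separately. First, by definition $\{v\}\up = \{u \in \cI : v \leq u\}$, so $\{v\}\up_c = \cIc \cap \{v\}\up$ consists of exactly the classical interpretations $u$ with $v \leq u$. The statement thus amounts to: the unique classical $u$ with $v \leq u$ is $v_t$.

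For the inclusion $\{v_t\} \subseteq \{v\}\up_c$, I would simply invoke the two observations already made in the paragraph preceding the proposition, namely that $v_t$ contains no $1$'s (so $v_t \in \cIc$) and that $v \leq v_t$ (since $v_t$ is obtained from $v$ by switching every $1$ to $2$).

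For the converse $\{v\}\up_c \subseteq \{v_t\}$, I would take an arbitrary $u \in \cIc$ with $v \leq u$ and show $u(p) = v_t(p)$ for every $p \in \Sigma$ by splitting on the value of $v(p)$: if $v(p) = 0$, then the second clause in the definition of $\leq$ forces $u(p) = 0 = v_t(p)$; if $v(p) = 2$, then $v(p) \leq u(p)$ forces $u(p) = 2 = v_t(p)$; if $v(p) = 1$, then $v(p) \leq u(p)$ gives $u(p) \in \{1,2\}$, but $u \in \cIc$ excludes $1$, leaving $u(p) = 2 = v_t(p)$.

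There is no real obstacle here — the proof is a routine case analysis. The only subtlety worth flagging is that the case $v(p) = 1$ is precisely where classicality of $u$ is used, which explains why the operator $c$ (restriction to $\cIc$) is essential to the statement: without it, any interpretation obtained by switching only some of the $1$'s would also lie in $\{v\}\up$.
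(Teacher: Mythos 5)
Your proof is correct and follows the only natural route: unfold $\{v\}\up_c$ as the set of classical interpretations above $v$, note that $v_t$ is one of them, and use a per-atom case split (with classicality of $u$ needed exactly in the $v(p)=1$ case) to show it is the only one. This is essentially the same routine argument as the paper's own proof in the appendix, so there is nothing further to add.
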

\begin{corollary}\label{cor:vt} \zlabel{cor:vt}
For any $S \subseteq \cI$ and for any interpretation $v$ we have: $v \in S_c \down$ iff $v_t \in S$.\qed
\end{corollary}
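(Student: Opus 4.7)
The plan is to reduce the statement to Proposition~\ref{prop:vt} by unfolding the definitions of $S_c$ and $\down$ and then rewriting the resulting existential quantifier as membership in a singleton.

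First I would unfold the left-hand side: by definition of $\down$, we have $v \in S_c \down$ iff there exists some $w \in S_c$ with $w \geq v$, and by definition of $(\,\cdot\,)_c$ this $w$ must be a classical interpretation in $S$. Equivalently, $v \in S_c\down$ iff the set of classical interpretations that are $\geq v$ intersects $S$, which by the definitions of $\up$ and $(\,\cdot\,)_c$ is exactly the condition $\{v\}\up_c \cap S \neq \emptyset$.

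Now I invoke Proposition~\ref{prop:vt}, which tells us that $\{v\}\up_c = \{v_t\}$. Substituting, the condition $\{v\}\up_c \cap S \neq \emptyset$ collapses to $\{v_t\} \cap S \neq \emptyset$, which is just $v_t \in S$. Chaining the equivalences gives $v \in S_c\down$ iff $v_t \in S$, as required.

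There is no real obstacle here: the corollary is essentially a direct packaging of Proposition~\ref{prop:vt} with the definition of $\down$, the only subtlety being to keep straight that $w \geq v$ and classicality of $w$ together pin $w$ down to be $v_t$.
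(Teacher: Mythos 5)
Your proof is correct and follows essentially the same route the paper intends: the corollary is a direct repackaging of Proposition~\ref{prop:vt}, obtained by unfolding $S_c\down$ into an existential over classical interpretations above $v$ and observing that $\{v\}\up_c=\{v_t\}$ pins that witness down to $v_t$. Nothing is missing.
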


A particularly interesting type of sets of interpretations are those $S$ satisfying that, for any $v \in S$, we also have $v_t \in S$. When this happens, we say that $S$ is \emph{total-closed} or \emph{classically closed}. As we will see, there is a one-to-one correspondence between a total-closed set of interpretations and a set of models for some (set of equivalent) formula(s). The definition of total-closed set can be formally captured as follows:

\begin{proposition}\label{prop:totalc} \zlabel{prop:totalc}
The following three assertions are equivalent:\\
(i) \ $S$ is total-closed
\hspace{45pt}
(ii) \ $S\subseteq S_c \down$
\hspace{45pt}
(iii) \ $S \up_c = S_c$.\qed
\end{proposition}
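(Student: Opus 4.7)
My plan is to prove the three-way equivalence by a short cycle, relying heavily on the two facts already established in Proposition~\ref{prop:vt} and Corollary~\ref{cor:vt}. Concretely, I would show (i)$\Rightarrow$(ii)$\Rightarrow$(iii)$\Rightarrow$(i).

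For (i)$\Rightarrow$(ii), suppose $S$ is total-closed and pick $v\in S$. By total-closure $v_t\in S$, and since $v_t$ is classical we have $v_t\in S_c$; as $v\leq v_t$, this gives $v\in S_c\!\down$. The step (ii)$\Rightarrow$(i) is essentially an immediate repackaging of Corollary~\ref{cor:vt}: if $v\in S\subseteq S_c\!\down$ then Corollary~\ref{cor:vt} yields $v_t\in S$. In fact (i)$\Leftrightarrow$(ii) is just Corollary~\ref{cor:vt} applied pointwise, so once (i)$\Leftrightarrow$(ii) is in hand the remaining work is to hook (iii) into the chain.

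For the direction (ii)$\Rightarrow$(iii), I would prove the two inclusions separately. The inclusion $S_c\subseteq S\!\up_c$ is immediate since every $v\in S_c\subseteq S$ satisfies $v\leq v$ and hence $v\in S\!\up$, while it is classical by hypothesis. For the reverse inclusion $S\!\up_c\subseteq S_c$, pick a classical $u\in S\!\up$ and let $v\in S$ witness $v\leq u$; then $u\in\{v\}\!\up_c$, and by Proposition~\ref{prop:vt} this forces $u=v_t$. Using (i), which we already have from (ii), the total-closure of $S$ gives $v_t\in S$, i.e.\ $u\in S_c$. Finally, for (iii)$\Rightarrow$(i), take any $v\in S$; then $v\in S\!\up$, so by Proposition~\ref{prop:vt} we get $v_t\in\{v\}\!\up_c\subseteq S\!\up_c=S_c\subseteq S$, establishing total-closure.

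The only subtle point—and hence the main thing to check carefully—is the equality $u=v_t$ in the argument $S\!\up_c\subseteq S_c$: one must notice that when $u$ is classical and $v\leq u$, Proposition~\ref{prop:vt} pins $u$ down as exactly $v_t$, which is what lets (iii) force total-closure. Everything else is bookkeeping between the definitions of $\down$, $\up$ and $(\cdot)_c$ and the two earlier results, so no new combinatorial work is required.
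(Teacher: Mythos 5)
Your proof is correct: each implication in the cycle (i)$\Rightarrow$(ii)$\Rightarrow$(iii)$\Rightarrow$(i) checks out, and the key step identifying a classical $u\geq v$ with $v_t$ via Proposition~\ref{prop:vt} is exactly the right use of that result. This is essentially the same argument the paper relies on, built directly on Proposition~\ref{prop:vt} and Corollary~\ref{cor:vt} as intended.
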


\begin{lemma}\label{lem:cm} \zlabel{lem:cm}
For any set of interpretations $S$, it holds that $(\non{S})_c \down \subseteq \non{(S_c \down)}$.\qed
\end{lemma}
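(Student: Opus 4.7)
My plan is to unfold the definitions, pick an arbitrary element $v \in (\non{S})_c\down$, and show $v \notin S_c\down$ by combining Proposition~\ref{prop:vt} and Corollary~\ref{cor:vt}.

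First, I would expand membership on the left. If $v \in (\non{S})_c\down$, then by definition of $\down$ there is some $w \geq v$ with $w \in (\non{S})_c = \cIc \cap \non{S}$; so $w$ is classical and $w \notin S$. Next, since $v \leq w$ and $w$ is classical, $w$ lies in $\{v\}\up \cap \cIc = \{v\}\up_c$. Proposition~\ref{prop:vt} tells us that $\{v\}\up_c = \{v_t\}$, and therefore $w = v_t$. Consequently $v_t \notin S$.

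Finally, I would apply Corollary~\ref{cor:vt} in its contrapositive form: $v \in S_c\down$ iff $v_t \in S$, so from $v_t \notin S$ we conclude $v \notin S_c\down$, i.e.\ $v \in \non{(S_c\down)}$. Since $v$ was arbitrary, this gives the desired inclusion.

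There is no real obstacle here: the lemma essentially says that the unique classical interpretation above $v$, namely $v_t$, decides simultaneously whether $v$ belongs to $S_c\down$ or to $(\non{S})_c\down$, so the two sets cannot share an element. The only thing to be careful about is that the reverse inclusion does not hold in general (one can have interpretations $v$ with $v_t \in S$ but where $v$ itself is not in $S$, making $v$ sit outside both $(\non{S})_c\down$ and the complement situation), which is why only one direction is asserted.
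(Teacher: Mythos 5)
Your proof of the stated inclusion is correct and takes the route the paper's machinery is designed for: pick $v$ with a classical witness $w \geq v$ in $\non{S}$, use Proposition~\ref{prop:vt} to identify $w$ with $v_t$, and then apply Corollary~\ref{cor:vt} in contrapositive to conclude $v \notin S_c\down$. Every step checks out against the definitions of $\down$ and $(\cdot)_c$.

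However, your closing remark is mistaken: the reverse inclusion \emph{does} hold, and in fact $(\non{S})_c\down \ = \non{(S_c\down)}$ for every $S$. Applying Corollary~\ref{cor:vt} to $\non{S}$ gives $v \in (\non{S})_c\down$ iff $v_t \in \non{S}$, i.e.\ iff $v_t \notin S$; applying it to $S$ gives $v \in \non{(S_c\down)}$ iff $v_t \notin S$. These are the same condition, so the two sets coincide. The scenario you describe ($v_t \in S$ but $v \notin S$) places $v$ inside $S_c\down$ and hence \emph{outside} $\non{(S_c\down)}$, so it does not witness a failure of the inclusion $\non{(S_c\down)} \subseteq (\non{S})_c\down$. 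The lemma asserts only one direction because that is all the paper needs (e.g.\ for Proposition~\ref{prop:cm}), not because the other direction fails.
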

From this, together with Proposition \ref{prop:totalc} (ii) we immediately conclude
\begin{proposition}\label{prop:cm} \zlabel{prop:cm}
For any total-closed set of interpretations $S$, it holds that $(\non{S})_c \down \subseteq \non{S}$.\qed
\end{proposition}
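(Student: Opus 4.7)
The plan is to chain Lemma~\ref{lem:cm} with the characterisation of total-closedness given by Proposition~\ref{prop:totalc}(ii), using only the elementary fact that complementation reverses inclusion.

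First I would invoke Lemma~\ref{lem:cm} directly: without any hypothesis on $S$, it already gives $(\non{S})_c \down \subseteq \non{(S_c \down)}$. So it remains to upgrade the right-hand side from $\non{(S_c \down)}$ to $\non{S}$, and this is exactly where the total-closedness of $S$ enters. Proposition~\ref{prop:totalc}(ii) tells us that $S$ being total-closed is equivalent to $S \subseteq S_c \down$. Taking complements in $\cI$ reverses this inclusion, yielding $\non{(S_c \down)} \subseteq \non{S}$.

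Finally I would just concatenate the two inclusions:
\[
(\non{S})_c \down \ \subseteq \ \non{(S_c \down)} \ \subseteq \ \non{S},
\]
which is the desired statement. There is no real obstacle here; the only thing to keep in mind is that the complement is taken inside $\cI$ (so that both sides live in the same ambient set and the inclusion-reversal step is valid), but this is already the convention fixed at the beginning of Section~\ref{sec:sets} where $\non{S} \eqdef \cI \setminus S$.
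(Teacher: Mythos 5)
Your proof is correct and is exactly the argument the paper itself gives: it derives Proposition~\ref{prop:cm} by chaining Lemma~\ref{lem:cm} with Proposition~\ref{prop:totalc}(ii), using that complementation in $\cI$ reverses the inclusion $S \subseteq S_c\down$. Nothing to add.
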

\noindent When $S$ is a total-closed set of models, this proposition asserts that any interpretation below a classical countermodel is also a countermodel. In fact, Proposition~\ref{prop:cm} corresponds to what~\cite{CF07} defined as \emph{total-closed set of countermodels} $\non{S}$.

%%%%%%%%%%%%%%%%%%%%%%%%%%%%%%%%%%%%%%%%%%%%%%%%%%%%%%%%%%%%%%%%%%%%%%%%%%%%%%%%%%%%%

\section{Denotational semantics}
\label{sec:zero}

In this section we consider a denotational semantics for $G_3$ and for equilibrium models. Rather than saying when an interpretation $v$ is a model of a formula $\varphi$, the main idea is to capture the \emph{whole set of models} of $\varphi$ as a set of interpretations we will denote by $\den{\varphi}$. As we explain next, this set can be completely defined by structural induction without actually resorting to the valuation of formulas.

\begin{definition}[Denotation]
The \emph{denotation} of a formula $\varphi$, written $\den{\varphi}$, is recursively defined as follows
\[
\begin{array}{rcl@{\hspace{30pt}}rcl}
\den{\bot} & \eqdef & \emptyset &
\den{\alpha \wedge \beta} & \eqdef & \den{\alpha} \cap \den{\beta}\\
\den{p} & \eqdef & \{v \in \cI \; : \; v(p)=2 \} &
\den{\alpha \vee \beta} & \eqdef & \den{\alpha} \cup \den{\beta} \\
\den{ \alpha \rightarrow \beta} & \eqdef & \big( \non{\den{\alpha}} \cup \den{\beta} \big) \cap \big( \ \non{\den{\alpha}} \cup \den{\beta} \ \big)_c\down 
\end{array}
\]
\noindent where $p \in \Sigma$ is an atom, and $\alpha, \beta \in \L$ are formulas in their turn.\qed 
\end{definition}

We say that a formula $\alpha$ is a \emph{tautology} iff $\den\alpha=\cI$ and that the formula is \emph{inconsistent} iff $\den\alpha=\emptyset$. The following theorem shows that this definition actually captures the set of models of $\alpha$, i.e., the set of interpretations that make $v(\alpha)=2$ using $G_3$ valuations of formulas (Definition~\ref{def:val}). Moreover, it also proves that $v_t \in \den{\alpha}$ is equivalent to $v(\alpha)\neq 0$.

\begin{theorem}\label{th:den-g3} \zlabel{th:den-g3}
Let $v \in \cI$ be a partial interpretation and $\alpha \in \L$ a formula. Then:
\begin{enumerate}
\item[\rm (i)] $v(\alpha)=2$ in $G_3$ iff $v \in \den\alpha$.
\item[\rm (ii)] $v(\alpha) \neq 0$ in $G_3$ iff $v_t \in \den\alpha$.\qed
\end{enumerate}
\end{theorem}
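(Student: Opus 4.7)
The plan is to prove both (i) and (ii) simultaneously by structural induction on $\alpha$, since the two statements are intertwined through the implication case: the second conjunct of $\den{\alpha \to \beta}$, namely $(\non{\den\alpha} \cup \den\beta)_c\down$, is designed precisely to force a condition on $v_t$ via Corollary \ref{cor:vt}, which is what part (ii) controls.

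For the base cases, $\alpha = \bot$ is trivial since $v(\bot)=0$ and $\den\bot = \emptyset$, so both sides of (i) and (ii) are vacuously false. For $\alpha = p$, part (i) is immediate from the definition of $\den{p}$, while part (ii) follows from the observation that $v(p) \neq 0$ iff $v(p) \in \{1,2\}$ iff $v_t(p)=2$. The conjunction and disjunction cases are routine: for (i), use that $\min(x,y) = 2$ iff $x = y = 2$ and $\max(x,y)=2$ iff one of them is $2$, together with $\den{\alpha \wedge \beta} = \den\alpha \cap \den\beta$ and $\den{\alpha \vee \beta} = \den\alpha \cup \den\beta$; for (ii), replace ``$= 2$'' by ``$\neq 0$'' throughout, noting that $\min(x,y) \neq 0$ iff $x \neq 0$ and $y \neq 0$.

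The implication case is the main obstacle. The key preparatory step is to apply Corollary \ref{cor:vt} to rewrite membership in the denotation: for any $v \in \cI$, $v \in (\non{\den\alpha} \cup \den\beta)_c\down$ iff $v_t \in \non{\den\alpha} \cup \den\beta$. Hence $v \in \den{\alpha \to \beta}$ iff both of the following hold: (a) $v \notin \den\alpha$ or $v \in \den\beta$, and (b) $v_t \notin \den\alpha$ or $v_t \in \den\beta$. Applying the induction hypothesis, (a) becomes ``$v(\alpha) \neq 2$ or $v(\beta) = 2$'' (using IH(i)) and (b) becomes ``$v(\alpha) = 0$ or $v(\beta) \neq 0$'' (using IH(ii)). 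Proving (i) then reduces to checking, by case analysis on $v(\alpha) \in \{0,1,2\}$, that the conjunction of these two conditions is equivalent to $v(\alpha) \leq v(\beta)$: if $v(\alpha) = 0$ both hold and $0 \leq v(\beta)$; if $v(\alpha) = 1$ the first is automatic and the second reduces to $v(\beta) \neq 0$, i.e.\ $1 \leq v(\beta)$; if $v(\alpha) = 2$ both clauses force $v(\beta) = 2$, i.e.\ $v(\alpha) \leq v(\beta)$.

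Finally, for part (ii) at the implication case, observe that $v_t$ is classical and hence $(v_t)_t = v_t$, so applying Corollary \ref{cor:vt} to $v_t$ collapses the two conjuncts of $\den{\alpha \to \beta}$ into the single condition $v_t \in \non{\den\alpha} \cup \den\beta$, which by IH(ii) is equivalent to $v(\alpha) = 0$ or $v(\beta) \neq 0$. To close the argument, I would verify directly from Definition \ref{def:val} that $v(\alpha \to \beta) \neq 0$ is equivalent to exactly this disjunction: the only way to obtain value $0$ is $v(\alpha) > v(\beta)$ together with $v(\beta) = 0$, so the negation is $v(\alpha) = 0$ or $v(\beta) \neq 0$. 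The trickiest conceptual point, and the one worth stating explicitly, is the double use of Corollary \ref{cor:vt} in the implication clause, which is precisely what makes the denotational definition agree with the $G_3$ truth table.
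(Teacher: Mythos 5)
Your proof is correct and follows essentially the same route the paper takes: a simultaneous structural induction on $\alpha$ for both claims, with the implication case handled by applying Corollary~\ref{cor:vt} to reduce the second conjunct $\bigl(\non{\den\alpha}\cup\den\beta\bigr)_c\down$ to a condition on $v_t$, which is exactly what part (ii) of the induction hypothesis supplies. The case analysis on $v(\alpha)\in\{0,1,2\}$ and the observation that $(v_t)_t=v_t$ collapses the two conjuncts for part (ii) are both sound, so nothing is missing.
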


\noindent As $v(\alpha)=2$ implies $v(\alpha)\neq 0$, then $v\in \den\alpha$ implies $v_t \in \den\alpha$ and thus: 
\begin{corollary}\label{cor:vt1}
For any $\alpha \in \L$, $\den\alpha$ is total-closed.\qed
\end{corollary}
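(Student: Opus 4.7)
The corollary is essentially a direct consequence of Theorem~\ref{th:den-g3}, whose two items the paper has already placed side by side precisely so that this fact drops out with no work. My plan is simply to unpack the one-line argument sketched immediately before the corollary and connect it to the formal definition of ``total-closed''.

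First I would fix an arbitrary formula $\alpha \in \L$ and an arbitrary interpretation $v \in \den\alpha$, and show that $v_t \in \den\alpha$. By Theorem~\ref{th:den-g3}(i) applied to $v$, we have $v(\alpha)=2$ in $G_3$. In particular $v(\alpha) \neq 0$, so Theorem~\ref{th:den-g3}(ii) yields $v_t \in \den\alpha$. Since $v$ was arbitrary in $\den\alpha$, every element of $\den\alpha$ has its $t$-companion in $\den\alpha$, which is exactly the defining condition of a total-closed set introduced before Proposition~\ref{prop:totalc}.

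If one prefers to present the conclusion via the equivalent characterisations of total-closedness given in Proposition~\ref{prop:totalc}, it suffices to observe that what we have just shown is $\den\alpha \subseteq \{v_t : v \in \den\alpha\}\up$, which together with Corollary~\ref{cor:vt} amounts to condition (ii), namely $\den\alpha \subseteq (\den\alpha)_c \down$. Either route is fine; the direct one from the definition seems cleaner.

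There is no real obstacle here: the whole content of the corollary is the implication $v(\alpha)=2 \Rightarrow v(\alpha) \neq 0$, which is trivial in a three-valued setting, combined with the two already-proved items of Theorem~\ref{th:den-g3}. The only mild care needed is to make sure one invokes item (i) on $v$ and item (ii) with witness $v$ (so that the conclusion concerns $v_t$), rather than mixing them up.
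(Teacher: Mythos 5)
Your proof is correct and follows exactly the paper's own argument: the sentence preceding the corollary derives it from Theorem~\ref{th:den-g3} via the trivial implication $v(\alpha)=2 \Rightarrow v(\alpha)\neq 0$, just as you do. Nothing to add.
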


\noindent In fact, this relation between models of a formula and total-closed sets of interpretations also holds in the other direction, that is, for any total-closed set of interpretations $S$, there always exists\footnote{This was proved in Theorem 2 from~\cite{CF07} using the dual concept of total-closed set of countermodels.} a formula $\alpha$ such that $\den\alpha = S$.

When compared to denotational semantics for other formalisms, it is clear that the denotation of implication is the most representative characteristic of $G_3$. Defining its denotation provides a powerful tool for studying fundamental properties of this logic. For instance, we can derive the denotation for negation as $\den{\neg \alpha}=\den{\alpha \rightarrow \bot}= \non{\den{\alpha}} \cap \non{\den{\alpha}}_c \down = \non{\den{\alpha}}_c \down$ where the last step follows from Proposition~\ref{prop:cm}. With this correspondence and Corollary~\ref{cor:vt} we conclude that $v \in \den{\neg \alpha} $ iff $v_t \in \non{\den{\alpha}}$, that is, $v$ is a model of $\neg \alpha$ iff $v_t$ is a classical countermodel of $\alpha$. Another application example of the denotation of implication is, for instance, this simple proof of the Deduction Theorem for $G_3$.

\begin{theorem}
For any pair of formulas $\alpha, \beta$: $\den\alpha \subseteq \den\beta$ iff $\den{\alpha \rightarrow \beta} =\cI$. Moreover, $\den\alpha=\den\beta$ iff $\den{\alpha \leftrightarrow \beta} =\cI$.
\end{theorem}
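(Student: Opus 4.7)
The plan is to unfold the denotation of implication and reduce both claims to manipulations of the already-established set operations. By definition we have
\[
\den{\alpha \rightarrow \beta} \;=\; \bigl(\non{\den{\alpha}} \cup \den{\beta}\bigr) \;\cap\; \bigl(\non{\den{\alpha}} \cup \den{\beta}\bigr)_c \down.
\]
Observe that $\den\alpha \subseteq \den\beta$ is exactly the statement $\non{\den\alpha} \cup \den\beta = \cI$, so the whole game is to show that, under this condition, the second factor also equals $\cI$, and conversely to derive the inclusion from $\den{\alpha\to\beta} = \cI$.

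For the forward direction, I would establish as an auxiliary observation that $\cIc \down = \cI$: given any $v \in \cI$, Proposition~\ref{prop:vt} gives $v \leq v_t \in \cIc$, so by the definition of $\down$ we have $v \in \cIc \down$. Thus, if $\den\alpha \subseteq \den\beta$, then $\non{\den\alpha} \cup \den\beta = \cI$, and the second factor of the denotation becomes $\cIc \down = \cI$; hence $\den{\alpha\rightarrow\beta} = \cI \cap \cI = \cI$.

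For the reverse direction, I would note that $\den{\alpha\rightarrow\beta} \subseteq \non{\den\alpha} \cup \den\beta$ holds by construction, so $\den{\alpha\rightarrow\beta} = \cI$ forces $\non{\den\alpha} \cup \den\beta = \cI$, which is equivalent to $\den\alpha \subseteq \den\beta$. This completes the first biconditional.

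For the second statement, I would treat $\alpha \leftrightarrow \beta$ as shorthand for $(\alpha \rightarrow \beta) \wedge (\beta \rightarrow \alpha)$, so the definition of denotation for conjunction yields $\den{\alpha \leftrightarrow \beta} = \den{\alpha \rightarrow \beta} \cap \den{\beta \rightarrow \alpha}$. Since both factors are subsets of $\cI$, this intersection equals $\cI$ iff each factor equals $\cI$; applying the first biconditional to each then gives $\den\alpha \subseteq \den\beta$ and $\den\beta \subseteq \den\alpha$, i.e.\ $\den\alpha = \den\beta$. The only mildly delicate step is the auxiliary equality $\cIc\down = \cI$; everything else is a direct unfolding of definitions.
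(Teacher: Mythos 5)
Your proof is correct and follows essentially the same route as the paper's: unfold the denotation of implication, use $\cIc\down=\cI$ for the left-to-right direction, use the inclusion $\den{\alpha\rightarrow\beta}\subseteq\non{\den\alpha}\cup\den\beta$ for the converse, and reduce the biconditional to the two implications. You are slightly more explicit than the paper (justifying $\cIc\down=\cI$ and both directions of the second claim), but the argument is the same.
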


\begin{proof}
For the result with implication, from left to right, assume $\den\alpha \subseteq \den\beta$. Then, $\big( \non{\den{\alpha}} \cup \den{\beta} \big) = \cI$ and so, $\den{\alpha \rightarrow \beta}=\cI \cap \cI_c\down=\cI$. For right to left, if $\den{\alpha \rightarrow \beta} =\cI$, take any $v \in \den\alpha$. As $v \in \den{\alpha \rightarrow \beta} \subseteq \big( \non{\den{\alpha}} \cup \den{\beta} \big)$ we conclude $v \in \den\beta$. For the double implication, simply note that $\den\alpha=\den\beta$ now means $\den{\alpha \rightarrow \beta}=\den{\beta \rightarrow \alpha}=\cI$. Therefore, $\den{\alpha \leftrightarrow \beta}=\den{\alpha \rightarrow \beta} \cap \den{\beta \rightarrow \alpha}=\cI\cap\cI=\cI$.
\end{proof}

\noindent This denotation of implication is an intersection of two sets. We can also alternatively capture implication as a union of sets:
\begin{proposition}\label{prop:vt2} \zlabel{prop:vt2}
For any $\alpha, \beta \in \L$, it follows that:
\begin{eqnarray*}
\hspace{68pt} \den{ \alpha \rightarrow \beta} & = & \overline{\den\alpha}_c \down \ \cup \ ( \overline{\den{\alpha}} \cap \den\beta_c \down) \ \cup \ \den\beta
\hspace{68pt}\Box 
\end{eqnarray*}
\end{proposition}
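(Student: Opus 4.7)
The plan is to abbreviate $A \eqdef \overline{\den\alpha}$ and $B \eqdef \den\beta$ so that the claimed identity becomes
\[
(A \cup B) \cap (A \cup B)_c\down \ = \ A_c\down \ \cup \ (A \cap B_c\down) \ \cup \ B,
\]
and then proceed by pure set-algebraic manipulation, pushing the $c$ and $\down$ operators inside the second factor, distributing, and collapsing the resulting terms using total-closedness of $\den\alpha$ and $\den\beta$.

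First I would exploit the distributivity properties of $c$ over $\cup$ and of $\down$ over $\cup$ recorded in Section~\ref{sec:sets} to rewrite the second factor as $(A \cup B)_c\down = A_c\down \cup B_c\down$. Distributing $\cap$ over $\cup$ then splits the left-hand side into four pieces: $(A \cap A_c\down) \cup (A \cap B_c\down) \cup (B \cap A_c\down) \cup (B \cap B_c\down)$. The middle piece $A \cap B_c\down$ already appears on the right, so the remaining task is to show that the other three combine to $A_c\down \cup B$.

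For this I would invoke Corollary~\ref{cor:vt1}, which tells us that both $\den\alpha$ and $\den\beta$ are total-closed. Proposition~\ref{prop:totalc}(ii) applied to $\den\beta$ gives $B \subseteq B_c\down$, hence $B \cap B_c\down = B$. Proposition~\ref{prop:cm} applied to the total-closed set $\den\alpha$ gives $\overline{\den\alpha}_c\down \subseteq \overline{\den\alpha}$, that is $A_c\down \subseteq A$, so $A \cap A_c\down = A_c\down$. Finally $B \cap A_c\down \subseteq B$ is absorbed by $B$ in the union. Assembling these three simplifications yields precisely $A_c\down \cup (A \cap B_c\down) \cup B$.

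The only non-routine step is the identification $A \cap A_c\down = A_c\down$, which is where Proposition~\ref{prop:cm} — the closure property of the classical countermodels of a total-closed set — does the real work; the rest of the derivation is mechanical distributivity and absorption.
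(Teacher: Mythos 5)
Your proof is correct and follows the natural set-algebraic route that the paper's machinery is built for: expand the definition of $\den{\alpha \rightarrow \beta}$, use distributivity of $c$ and $\down$ over $\cup$ to split the second factor, and collapse the four resulting intersections via $\overline{\den\alpha}_c\down \subseteq \overline{\den\alpha}$ (Proposition~\ref{prop:cm}, applicable since $\den\alpha$ is total-closed by Corollary~\ref{cor:vt1}) and $\den\beta \subseteq \den\beta_c\down$ (Proposition~\ref{prop:totalc}(ii)). All steps check out, including the final absorption of $\den\beta \cap \overline{\den\alpha}_c\down$ into $\den\beta$, so nothing further is needed.
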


From this alternative representation of implication and the fact that $\non{\den{\alpha}}_c\down \ \subseteq \non{\den\alpha}$ (from Proposition~\ref{prop:cm}) we immediately conclude $\den\alpha \cap \den{\alpha \rightarrow \beta} = \den\alpha \cap \den\beta$. In other words, we have trivially proved that $\den{\alpha \wedge (\alpha \rightarrow \beta)}=\den{\alpha \wedge \beta}$ in $G_3$.

\subsection{Expressiveness of operators}

As an application of the denotational semantics, we will study the expressiveness of the set of propositional operators usually provided as a basis for $G_3$: $\{\wedge,\vee,\rightarrow,\bot\}$. In Intuitionistic Logic, it is well-known that we cannot represent any of these operators in terms of the others. In $G_3$, however, it is also known 
%\cite{???} 
that $\vee$ can be represented in terms of $\wedge$ and $\rightarrow$. In particular:

\begin{theorem}\label{th:or} \zlabel{th:or}
For any $\Sigma$, the system $\L_{\Sigma}\{\bot, \wedge, \rightarrow\}$ is complete because given any pair of formulas $\alpha, \beta$ for $\Sigma$, it holds that:
$\den{ \alpha \vee \beta} = \den{(\alpha \rightarrow  \beta) \rightarrow \beta} \cap \den{(\beta \rightarrow  \alpha) \rightarrow \alpha}$. \qed
\end{theorem}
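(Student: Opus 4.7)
The plan is to reduce the denotational identity to a statement about $G_3$ valuations using Theorem~\ref{th:den-g3}(i). That theorem tells us $v \in \den{\varphi}$ iff $v(\varphi)=2$, so it suffices to show that, for every $v \in \cI$,
\[
v(\alpha \vee \beta) = 2 \ \ \text{iff} \ \ v((\alpha \rightarrow \beta) \rightarrow \beta) = 2 \ \text{and} \ v((\beta \rightarrow \alpha) \rightarrow \alpha) = 2.
\]
This is a purely arithmetical fact about $\min$, $\max$, and the implication rule from Definition~\ref{def:val}, so I would avoid the set-theoretic manipulations entirely and just reason with the three values $\{0,1,2\}$.

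First, I would simplify the right-hand side. By the definition of implication, $v((\alpha \rightarrow \beta) \rightarrow \beta) = 2$ iff $v(\alpha \rightarrow \beta) \leq v(\beta)$. Splitting on whether $v(\alpha) \leq v(\beta)$: in the first subcase $v(\alpha \rightarrow \beta) = 2$, forcing $v(\beta) = 2$; in the second subcase $v(\alpha \rightarrow \beta) = v(\beta)$, and the condition holds automatically. Hence the first conjunct on the right is equivalent to the disjunction ``$v(\beta) = 2$ or $v(\alpha) > v(\beta)$''. By symmetry, the second conjunct is equivalent to ``$v(\alpha) = 2$ or $v(\beta) > v(\alpha)$''.

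Second, I would show that the conjunction of these two disjunctions is equivalent to $\max(v(\alpha), v(\beta)) = 2$, i.e.\ to $v(\alpha \vee \beta) = 2$. The forward direction is the interesting one: assume both disjunctions hold but, for contradiction, that $v(\alpha) < 2$ and $v(\beta) < 2$. Then the two disjunctions collapse to $v(\alpha) > v(\beta)$ and $v(\beta) > v(\alpha)$ respectively, which is impossible. Conversely, if $v(\alpha) = 2$ then the second disjunction is immediate, and the first holds because either $v(\beta) = 2$ already, or $v(\beta) < 2 = v(\alpha)$ so $v(\alpha) > v(\beta)$; the case $v(\beta) = 2$ is symmetric.

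The main obstacle is really just bookkeeping: there are several nested case splits on values in $\{0,1,2\}$ and on which of $v(\alpha), v(\beta)$ dominates. Nothing is conceptually hard, but one must be careful that ``$>$'' and ``$\leq$'' are used consistently and that the symmetric conjunct is genuinely handled by the symmetry argument rather than silently dropped.
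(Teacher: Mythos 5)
Your proof is correct. The reduction through Theorem~\ref{th:den-g3}(i) is legitimate: since $v \in \den{\varphi}$ iff $v(\varphi)=2$ for every formula $\varphi$, and the denotation sends $\wedge$ to $\cap$, the claimed set identity is exactly the pointwise equivalence you prove. Your simplifications also check out: $v(\varphi\rightarrow\psi)=2$ iff $v(\varphi)\leq v(\psi)$ (the ``otherwise'' branch returns $2$ only when $v(\psi)=2$, which already satisfies $v(\varphi)\leq v(\psi)$), the case split on $v(\alpha)\leq v(\beta)$ versus $v(\alpha)>v(\beta)$ correctly reduces the first conjunct to ``$v(\beta)=2$ or $v(\alpha)>v(\beta)$'', and the contradiction argument for the forward direction of the final equivalence is sound.

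Where you depart from the paper is in the choice of machinery rather than in substance. You deliberately step outside the denotational calculus and argue with three-valued truth tables, using Theorem~\ref{th:den-g3} only as a bridge back to sets of models. The paper's framework would instead unfold $\den{(\alpha\rightarrow\beta)\rightarrow\beta}$ directly via the set-theoretic definition of $\den{\cdot\rightarrow\cdot}$ (or its union form in Proposition~\ref{prop:vt2}) and manipulate the resulting expressions with $\non{(\cdot)}$, $(\cdot)_c$ and $\down$. Your route is shorter and more self-contained for this particular identity, since the definability of $\vee$ from $\rightarrow$ and $\wedge$ in $G_3$ is at bottom a truth-table fact; what it gives up is the exercise of staying inside the algebra of sets of interpretations that the paper advocates as a uniform proof vehicle (and which its companion work mechanises). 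Either way, the result stands.
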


Now, one may wonder whether $\rightarrow$ or $\wedge$ can be expressed in terms of the rest of operators. However, we prove next that this is not the case.

\begin{lemma}\label{lem:imply} \zlabel{lem:imply}
Let $\Sigma=\{p_1,\dots,p_n\}$ and let $\gamma \in \L_{\Sigma}\{\bot, \wedge, \vee\}$. Then $\den\gamma \subseteq \bigcup^n_{i=1} \den{p_i}$. \qed
\end{lemma}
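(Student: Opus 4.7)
The plan is a straightforward structural induction on the formula $\gamma$, exploiting the fact that the denotational clauses for the three allowed connectives $\{\bot,\wedge,\vee\}$ are all built from $\cap$, $\cup$ and $\emptyset$, which interact well with a target of the form ``being contained in a union''. Let $U \eqdef \bigcup_{i=1}^n \den{p_i}$; I would first note that $U$ is exactly the set of partial interpretations that assign value $2$ to at least one atom, so the claim is morally ``a purely conjunctive/disjunctive formula can only be satisfied by making some atom true''.

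First I would dispose of the two base cases. For $\gamma = \bot$, the denotational clause gives $\den{\bot}=\emptyset$, which is contained in anything. For $\gamma = p_i$ with $p_i \in \Sigma$, we have $\den{p_i} \subseteq U$ by the very definition of $U$ as the union of such atomic denotations.

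Next I would handle the two inductive steps, where the induction hypothesis gives $\den\alpha \subseteq U$ and $\den\beta \subseteq U$ for the immediate subformulas. For $\gamma = \alpha \wedge \beta$ the denotation is $\den\alpha \cap \den\beta$, which is contained in $\den\alpha$ and hence in $U$ by the hypothesis on $\alpha$ (so only one of the two hypotheses is even needed). For $\gamma = \alpha \vee \beta$ the denotation is $\den\alpha \cup \den\beta$, and the union of two subsets of $U$ is again a subset of $U$. Crucially, the grammar $\L_\Sigma\{\bot,\wedge,\vee\}$ does not allow the implication clause, whose denotation involves the complementation operator $\overline{\,\cdot\,}$ and would break this monotonicity argument; this is exactly why the lemma is phrased for this restricted fragment.

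There is essentially no obstacle: the entire work has been front-loaded into the denotational definitions of Section~4, which make $\wedge$ and $\vee$ denote the set-theoretic $\cap$ and $\cup$. The only modest care needed is to make sure one writes the induction over the grammar $\alpha ::= \bot \mid p \mid \alpha_1 \wedge \alpha_2 \mid \alpha_1 \vee \alpha_2$ (i.e., really restricted to $\L_\Sigma\{\bot,\wedge,\vee\}$) rather than the full language, so that the implication case simply does not arise.
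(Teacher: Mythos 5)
Your proof is correct and is essentially the only natural argument for this lemma: a structural induction over the restricted grammar $\{\bot,\wedge,\vee\}$, using that $\emptyset$, $\cap$ and $\cup$ all preserve containment in the union $\bigcup_{i=1}^n \den{p_i}$, which is exactly how the paper proves it in the appendix. Your observation that the implication clause is excluded precisely because its complementation would break the argument is also the right way to see why the fragment is restricted.
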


\begin{theorem}\label{th:imp} \zlabel{th:imp}
If $\{p_1,p_2\}\subseteq \Sigma$ then $p_1 \rightarrow p_2$ cannot be equivalently represented in $\L_{\Sigma}\{\bot, \vee, \wedge\}$.\qed
\end{theorem}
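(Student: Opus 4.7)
The plan is to exhibit a single interpretation that lies in $\den{p_1 \rightarrow p_2}$ but cannot lie in the denotation of any formula built from $\{\bot, \wedge, \vee\}$. The tool to prove the latter is already available: Lemma~\ref{lem:imply} tells us that any $\gamma \in \L_{\Sigma}\{\bot, \wedge, \vee\}$ satisfies $\den\gamma \subseteq \bigcup_{i=1}^n \den{p_i}$, so it suffices to find an element of $\den{p_1 \rightarrow p_2}$ that does not satisfy any atom with value $2$.

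Concretely, I would take $v$ to be the interpretation assigning $0$ to every atom of $\Sigma$ (any interpretation with $v(p_i)\in\{0,1\}$ for all $i$ and $v(p_1)\leq v(p_2)$ would work equally well; the all-zero interpretation is just the simplest). Then $v(p_1)=0\leq 0=v(p_2)$, so by Definition~\ref{def:val} we have $v(p_1 \rightarrow p_2)=2$, and Theorem~\ref{th:den-g3}(i) gives $v \in \den{p_1 \rightarrow p_2}$. On the other hand, for every atom $p_i \in \Sigma$ we have $v(p_i)=0\neq 2$, so $v \notin \den{p_i}$ for every $i$, i.e.\ $v \notin \bigcup_{i=1}^n \den{p_i}$.

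Now suppose, toward a contradiction, that there exists a formula $\gamma \in \L_{\Sigma}\{\bot, \vee, \wedge\}$ with $\den\gamma = \den{p_1 \rightarrow p_2}$. Then $v \in \den\gamma$, but Lemma~\ref{lem:imply} forces $v \in \bigcup_{i=1}^n \den{p_i}$, contradicting what we just observed. Hence no such $\gamma$ exists, which is exactly the statement of the theorem.

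Since the heavy lifting is done by Lemma~\ref{lem:imply}, there is no real obstacle here; the only thing that requires a moment's thought is verifying that the chosen $v$ really is a model of $p_1 \rightarrow p_2$ in $G_3$, which is immediate from the definition of the valuation of implication when $v(p_1)\leq v(p_2)$.
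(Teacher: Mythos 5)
Your proof is correct and follows the route the paper clearly intends: Lemma~\ref{lem:imply} is stated precisely to supply the upper bound $\den\gamma \subseteq \bigcup_{i=1}^n \den{p_i}$, and the all-zero interpretation is a valid witness in $\den{p_1 \rightarrow p_2}$ lying outside that union. Nothing further is needed.
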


\noindent This result is not surprising since we can further observe that the denotations for $\wedge$ and $\vee$, respectively the intersection and the union, are monotonic with respect to set inclusion, whereas $\den{\alpha \rightarrow \beta}$ is monotonic for the consequent and anti-monotonic for the antecedent (see Proposition~\zref{prop:mon} in the online appendix).

We will show next that conjunction cannot be expressed in terms of $\vee, \rightarrow, \bot$. To this aim, we begin proving the following lemma.

\begin{lemma}\label{lem:pq} \zlabel{lem:pq}
Let $\Sigma$ be of the form $\Sigma=\{p,q,\dots\}$ and let $\gamma \in \L_{\Sigma}\{\bot, \vee, \rightarrow\}$, then for any subformula $\delta$ of $\gamma$ and any $v \in \den\delta$ of the form $v=\str{22\dots}$ (i.e. making both atoms true), there exists some $u \in \den{\delta}$ such that $u<v$ and $u$ coincides with $v$ in all atoms excepting $p,q$.\qed
\end{lemma}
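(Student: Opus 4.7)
The plan is a structural induction on $\delta$. Call any interpretation $u<v$ that coincides with $v$ on every atom outside $\{p,q\}$ a $pq$-\emph{decrement} of $v$: since $v(p)=v(q)=2$, there are exactly three such interpretations, obtained by switching at least one of $v(p), v(q)$ from $2$ to $1$. The base cases are direct. For $\delta=\bot$ the claim is vacuous, and if $\delta$ is an atom $r$ then $v(r)=2$ and a witness is the $pq$-decrement that leaves $r$ untouched (switch whichever of $p,q$ is distinct from $r$, or either if $r\notin\{p,q\}$). The disjunction case $\delta=\alpha\vee\beta$ follows by applying the induction hypothesis to whichever disjunct contains $v$; the resulting $pq$-decrement remains in $\den\alpha\cup\den\beta=\den\delta$.

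The implication case $\delta=\alpha\to\beta$ is the crux. By Definition~\ref{def:val} and Theorem~\ref{th:den-g3}(i), $v\in\den\delta$ means $v(\alpha)\leq v(\beta)$. If $v(\beta)=2$, the induction hypothesis on $\beta$ produces a $pq$-decrement $u$ with $u(\beta)=2$, which forces $u(\alpha\to\beta)=2$ regardless of $u(\alpha)$. Otherwise $v(\beta)<2$, so $v(\alpha)<2$ as well; here the key observation is that every $pq$-decrement $u$ satisfies $u_t=v_t$, because lowering a value from $2$ to $1$ leaves the classical projection unchanged. Combined with Theorem~\ref{th:den-g3}(ii) this yields $u(\alpha)\neq 0\iff v_t\in\den\alpha\iff v(\alpha)\neq 0$. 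Consequently, whenever $v(\alpha)=0$, every $pq$-decrement $u$ has $u(\alpha)=0$ and so $u(\alpha\to\beta)=2$, and any such $u$ witnesses the claim.

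The main obstacle is the residual subcase $v(\alpha)=v(\beta)=1$, in which neither subformula contains $v$ in its denotation and so the induction hypothesis is not directly available. This possibility is excluded as soon as $v$ is classical, since a classical valuation in $G_3$ only takes the values $0$ and $2$; reading $v=\str{22\dots}$ as a classical interpretation with $v(p)=v(q)=2$ -- the shape actually needed in the subsequent application, where the lemma is invoked to show that $\wedge$ is not definable in $\L_{\Sigma}\{\bot,\vee,\to\}$ -- makes the case split exhaustive and closes the induction.
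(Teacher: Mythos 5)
Your structural induction is the natural route for this lemma, and the substantive content of your write-up is exactly where it should be: the residual subcase $v(\alpha)=v(\beta)=1$ in the implication step. Your instinct there is well founded, and in fact the situation is worse than ``the induction hypothesis is not directly available'': the unrestricted statement is \emph{false}. Concretely, for $\Sigma=\{p,q,r\}$ take $\gamma=\delta=\big((p\rightarrow r)\vee(q\rightarrow r)\big)\rightarrow r\in\L_{\Sigma}\{\bot,\vee,\rightarrow\}$ and $v=\str{221}$. Then $v(p\rightarrow r)=v(q\rightarrow r)=v(r)=1$, so $v(\delta)=2$ and $v\in\den{\delta}$; but each of the three candidates $u\in\{\str{121},\str{211},\str{111}\}$ makes the antecedent evaluate to $2$ (at least one of $u(p),u(q)$ equals $1=u(r)$, so the corresponding implication gets value $2$) while $u(r)=1$, whence $u(\delta)=1$ and $u\notin\den{\delta}$. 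So reading $v$ as a classical interpretation is not merely convenient but forced if the lemma is to hold, and under that reading your case split is exhaustive (a total interpretation never gives a formula the value $1$), each branch is handled correctly (the observation $u_t=v_t$ together with Theorem~\ref{th:den-g3}(ii) for $v(\alpha)=0$, and the induction hypothesis on $\beta$ for $v(\beta)=2$), and the classical instance is all that Theorem~\ref{th:wedge} needs, since there the lemma is applied to a total $v$ with $v(p_1)=v(p_2)=2$. The one thing you should add is an explicit statement that you are proving the lemma for classical $v$ only --- ideally accompanied by the counterexample above --- so that a reader does not take the general form at face value.
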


\begin{theorem}\label{th:wedge} \zlabel{th:wedge}
If $\{p_1,p_2\}\subseteq \Sigma$ then $p_1 \wedge p_2$ cannot be equivalently represented in $\L_{\Sigma}\{\bot, \vee, \rightarrow\}$.\qed
\end{theorem}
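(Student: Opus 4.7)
My plan is a contradiction argument that exploits Lemma~\ref{lem:pq} applied to the whole formula $\gamma$ taken as its own subformula. Suppose for contradiction that some $\gamma \in \L_{\Sigma}\{\bot, \vee, \rightarrow\}$ satisfies $\den\gamma = \den{p_1 \wedge p_2}$. I will choose the simplest interpretation that is forced to be a model of $p_1\wedge p_2$, namely the $v$ with $v(p_1)=v(p_2)=2$ and $v(r)=0$ for every other $r \in \Sigma$. Since $v(p_1\wedge p_2) = \min(2,2) = 2$, we have $v \in \den{p_1\wedge p_2} = \den\gamma$, and moreover $v$ is of the form $\str{22\dots}$ required by the lemma.

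Applying Lemma~\ref{lem:pq} to $\delta=\gamma$ with $p=p_1,\,q=p_2$ yields some $u \in \den\gamma$ with $u<v$ and $u(r)=v(r)=0$ for every atom $r\notin\{p_1,p_2\}$. The core observation is to pin down the possible values of $u(p_1)$ and $u(p_2)$: from $u\leq v$ we have $u(p_i)\leq 2$, and the side condition of $\leq$ (namely $u(p_i)=0$ implies $v(p_i)=0$) rules out $u(p_i)=0$ since $v(p_i)=2$. Hence $u(p_i)\in\{1,2\}$ for $i=1,2$. Because $u<v$ and $u$ agrees with $v$ outside $\{p_1,p_2\}$, at least one of these values must be strictly less than $2$, so at least one $u(p_i)$ equals $1$.

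But then $u(p_1\wedge p_2) = \min(u(p_1),u(p_2)) \leq 1 \neq 2$, so $u \notin \den{p_1\wedge p_2}$, contradicting $u \in \den\gamma = \den{p_1\wedge p_2}$. This completes the argument. The proof itself is quite short once the lemma is in hand; the only thing to be careful about is the fussy bookkeeping of the ordering $\leq$ (in particular that $u(p_i)=0$ is forbidden because $v(p_i)=2$), which is exactly what forces the witness $u$ provided by the lemma to introduce a $1$ at $p_1$ or $p_2$ and therefore fail $p_1\wedge p_2$. The real content of the argument lies in Lemma~\ref{lem:pq}, which is where the restriction to the fragment $\{\bot,\vee,\rightarrow\}$ is used; once that lemma is available, the derivation above is essentially mechanical.
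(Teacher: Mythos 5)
Your proof is correct and follows the route the paper intends: instantiate Lemma~\ref{lem:pq} at the full formula $\gamma$ with a model $v$ assigning $2$ to $p_1,p_2$ and $0$ elsewhere, and observe that the guaranteed witness $u<v$ must set some $u(p_i)=1$ (since $u(p_i)=0$ is excluded by the definition of $\leq$), hence cannot satisfy $p_1\wedge p_2$. Your bookkeeping of the order relation is exactly right, so no gaps remain.
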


\subsection{Denotation of equilibrium models}

We can use the denotational semantics to capture equilibrium models as follows.
\begin{theorem}\label{th:1eq} \zlabel{th:1eq}
A classical interpretation $v \in \cIc$ is an equilibrium model of $\alpha$ iff it  satisfies the fixpoint condition $\den{\alpha} \cap \{v\}\down \ = \{v\}$.\qed
\end{theorem}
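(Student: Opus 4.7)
The plan is to unfold both sides of the biconditional into their set-theoretic content and observe that they coincide almost by definition. First I would note that, by the definition of the $\down$ operator in Section~\ref{sec:sets} applied to a singleton, $\{v\}\down = \{u \in \cI : u \leq v\}$, so the intersection $\den{\alpha} \cap \{v\}\down$ is exactly the set of models of $\alpha$ (in the sense of Theorem~\ref{th:den-g3}(i)) that are $\leq$-below $v$.

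For the left-to-right direction, suppose $v \in \cIc$ is an equilibrium model of $\alpha$. Then $v$ is a model, so $v \in \den\alpha$ by Theorem~\ref{th:den-g3}(i), and since $v \leq v$ we have $v \in \den{\alpha} \cap \{v\}\down$. Conversely, if some $u \in \den{\alpha} \cap \{v\}\down$ were different from $v$, then $u \leq v$ and $u \neq v$ would give $u < v$ with $u$ a model of $\alpha$, contradicting the $\leq$-minimality that is built into the definition of equilibrium model. Hence the intersection reduces to $\{v\}$.

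For the right-to-left direction, assume $v \in \cIc$ satisfies $\den{\alpha} \cap \{v\}\down = \{v\}$. Since $v$ belongs to this set, $v \in \den\alpha$, so $v$ is a model of $\alpha$ by Theorem~\ref{th:den-g3}(i). Moreover, any $u < v$ with $u \in \den\alpha$ would lie in $\den{\alpha} \cap \{v\}\down$ and differ from $v$, contradicting the hypothesis; thus $v$ is $\leq$-minimal among models of $\alpha$. Because $v$ is classical by assumption, $v$ is an equilibrium model.

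There is essentially no obstacle here: the whole content of the theorem is the remark that $\{v\}\down$ captures ``everything below $v$'' and that minimality of $v$ in $\den\alpha$ translates exactly into the intersection being the singleton $\{v\}$. The only subtlety worth stating explicitly is the use of Theorem~\ref{th:den-g3}(i) to identify membership in $\den\alpha$ with satisfaction in $G_3$, so that the $\leq$-minimality condition from the definition of equilibrium model lines up precisely with the uniqueness requirement in the fixpoint equation.
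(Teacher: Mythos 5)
Your proof is correct and proceeds exactly as the paper's own argument must: unfolding $\{v\}\down$ as the set of interpretations $\leq v$, using Theorem~\ref{th:den-g3}(i) to identify $\den\alpha$ with the set of models, and matching the $\leq$-minimality in the definition of equilibrium model with the requirement that the intersection be the singleton $\{v\}$. Both directions are handled cleanly; nothing is missing.
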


The set of equilibrium models can also be captured as the denotation below.
\begin{theorem}\label{th:eq} \zlabel{th:eq}
The set of equilibrium models of $\alpha$, denoted as $\den{\alpha}_e$, corresponds to the expression:
\[
\hspace{110pt}
 \den{\alpha}_e \eqdef \den{\alpha}_c \ \setminus \ (\den{\alpha} \setminus \cIc ) \up \hspace{110pt} \Box
\]
\end{theorem}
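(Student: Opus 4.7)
The plan is to unfold the definition of equilibrium model and then show that the set-theoretic expression on the right is just a re-packaging of the same condition, using Theorem~\ref{th:1eq} as the main bridge.

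First I would fix a classical interpretation $v \in \cIc$ and note what it means for $v$ to be an equilibrium model of $\alpha$. By Theorem~\ref{th:1eq}, this is equivalent to $\den{\alpha} \cap \{v\}\down = \{v\}$, which unpacks into two conjuncts: (a)~$v \in \den\alpha$, i.e.\ $v \in \den\alpha_c$; and (b)~there is no $u \in \den\alpha$ with $u < v$. So the proof reduces to showing that, \emph{given} $v$ is a classical model of $\alpha$, condition (b) is equivalent to $v \notin (\den\alpha \setminus \cIc)\up$.

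The key observation for this equivalence is the following small fact about the ordering $\leq$ and classical interpretations: if $v \in \cIc$ and $u \leq v$ with $u \neq v$, then $u$ must contain at least one $1$, and hence $u \notin \cIc$. Indeed, by the ``switch $1$'s into $2$'s'' characterisation of $\leq$, obtaining $v$ from $u$ requires at least one $1$ in $u$ whenever $u \neq v$, because $v$ itself has no $1$'s. Using this, ``there exists $u \in \den\alpha$ with $u < v$'' is the same as ``there exists $u \in \den\alpha \setminus \cIc$ with $u \leq v$'' (the $u \neq v$ restriction is automatic since the non-classical $u$ cannot equal the classical $v$).

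Finally I would observe that ``there exists $u \in \den\alpha \setminus \cIc$ with $u \leq v$'' is, by definition of the $\up$ operator, exactly the condition $v \in (\den\alpha \setminus \cIc)\up$. Combining this with (a), the equilibrium condition on $v$ becomes $v \in \den\alpha_c$ and $v \notin (\den\alpha \setminus \cIc)\up$, i.e.\ $v \in \den\alpha_c \setminus (\den\alpha \setminus \cIc)\up$, which is the desired expression. I do not expect any real obstacle here; the only delicate point is the asymmetry between $<$ and $\leq$ when restricting the witness $u$ to be non-classical, and this is handled by the elementary observation above about $\leq$ and $\cIc$.
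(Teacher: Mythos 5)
Your proof is correct and is essentially the standard unfolding of the definitions: restrict to classical $v$, use the fixpoint/minimality characterisation, and observe that any strict $\leq$-predecessor of a classical interpretation must be non-classical (since classical interpretations are $\leq$-maximal), so that ``no strictly smaller model'' becomes exactly $v \notin (\den{\alpha}\setminus\cIc)\up$. This matches the approach the paper takes; no issues.
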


As an application of Theorem~\ref{th:eq}, we have used it to obtain the following characterisation of equilibrium models of a disjunction:

\begin{proposition}\label{prop:or}
For any pair of formulas $\alpha$ and $\beta$:
\begin{eqnarray*}
\den{\alpha \vee \beta}_e & = & 
\big( \den{\alpha}_e \setminus \den{\beta}_c \big)
\ \cup \
\big( \den{\beta}_e \setminus \den{\alpha}_c \big)
\ \cup \
\big( \den{\alpha}_e \cap \den{\beta}_e \big)
\end{eqnarray*}
\end{proposition}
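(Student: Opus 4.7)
The plan is to work directly with the set expression for equilibrium models given by Theorem~\ref{th:eq} and reduce the statement to elementary set algebra. Expanding $\den{\alpha \vee \beta}_e$ using $\den{\alpha \vee \beta} = \den{\alpha} \cup \den{\beta}$ and the distributivity of $(\cdot)_c$ and $\up$ over $\cup$ yields
\[
\den{\alpha \vee \beta}_e = (\den{\alpha}_c \cup \den{\beta}_c) \setminus \bigl((\den{\alpha} \setminus \cIc)\up \cup (\den{\beta} \setminus \cIc)\up\bigr).
\]
Abbreviating $A = \den{\alpha}_c$, $A' = (\den{\alpha} \setminus \cIc)\up$ and similarly $B, B'$, this is $(A \cup B) \setminus (A' \cup B')$, while the claim's right-hand side is $((A \setminus A') \setminus B) \cup ((B \setminus B') \setminus A) \cup ((A \setminus A') \cap (B \setminus B'))$.

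This identity would fail for arbitrary sets, so before any Boolean manipulation I would establish the key structural fact that $A' \cap \cIc \subseteq A$ (and symmetrically $B' \cap \cIc \subseteq B$). This rests on the observation that if $v$ is classical and $v \geq u$, then $v = u_t$: for each atom $p$, $u(p)=1$ forces $v(p)=2$ since $v$ has no $1$'s and $v(p) \geq u(p)$, while $u(p) \in \{0,2\}$ trivially fixes $v(p)$. Combined with the total-closure of $\den{\alpha}$ given by Corollary~\ref{cor:vt1}, we get $u_t \in \den{\alpha}$, hence $v = u_t \in \den{\alpha}_c = A$.

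With this in hand, both inclusions follow by straightforward case analysis. For the inclusion LHS $\subseteq$ RHS, take $v \in (A \cup B) \setminus (A' \cup B')$ and split according to whether $v \in A \setminus B$, $v \in B \setminus A$, or $v \in A \cap B$; these three cases map directly onto the three disjuncts of the right-hand side. For RHS $\subseteq$ LHS, the only subtle step occurs for $v \in (A \setminus A') \setminus B$ (and symmetrically the second term), where to conclude $v \notin B'$ one invokes the structural fact: since $v \in A \subseteq \cIc$ and $v \notin B$, the inclusion $B' \cap \cIc \subseteq B$ forces $v \notin B'$; the third term is immediate. The main obstacle is thus recognising and proving that $A'$ and $B'$ contain no classical interpretations outside $A$ and $B$ respectively; once this is secured, the remaining verification is pure Boolean algebra.
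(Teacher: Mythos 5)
Your proposal is correct and takes essentially the same route as the paper: both expand $\den{\alpha \vee \beta}_e$ via Theorem~\ref{th:eq} and the distributivity of $(\cdot)_c$ and $\up$ over $\cup$, and both hinge on the same key structural fact that any classical interpretation in $(\den{\beta} \setminus \cIc)\up$ already lies in $\den{\beta}_c$ (which the paper states as $\non{\den{\beta}}_c \cap (\den{\beta} \setminus \cIc)\up = \emptyset$, derived from Proposition~\ref{prop:totalc}(iii) and total-closedness, exactly matching your $B' \cap \cIc \subseteq B$). The only difference is presentational --- you finish with an element-wise two-inclusion case analysis where the paper finishes with an equational chain --- so no further comparison is needed.
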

\begin{proof}
%[Proof of Proposition~\ref{prop:or}]
We begin applying some basic set operations:
\begin{eqnarray*}
\den{\alpha \vee \beta}_e & = & \den{\alpha \vee \beta}_c \ \setminus \ (\den{\alpha \vee \beta} \setminus \cIc ) \up  \\
& = &\big( \den{\alpha}_c \cup \den{\beta}_c \big)\ \setminus \ (\ (\den{\alpha}\setminus \cIc) \cup (\den{\beta} \setminus \cIc) \ ) \up  \\
& = &\big( \den{\alpha}_c \cup \den{\beta}_c \big)\ \setminus \ ( \ (\den{\alpha}\setminus \cIc)\up \cup \ (\den{\beta} \setminus \cIc)\up \ )  \\
& = &\big( \den{\alpha}_c \cup \den{\beta}_c \big)\ \cap \ \non{(\den{\alpha}\setminus \cIc)\up} \cap \ \non{(\den{\beta} \setminus \cIc)\up}  \\
& = &\den{\alpha}_c  \cap \ \non{(\den{\alpha}\setminus \cIc)\up} \cap \ \non{(\den{\beta} \setminus \cIc)\up}  \\
& & \cup \ \den{\beta}_c \cap \ \non{(\den{\alpha}\setminus \cIc)\up} \cap \ \non{(\den{\beta} \setminus \cIc)\up}  \\
& = &\underbrace{\den{\alpha}_e  \cap \ \non{(\den{\beta} \setminus \cIc)\up}}_{\gamma_1}  \ \ \cup \ \ \underbrace{\den{\beta}_e \cap \ \non{(\den{\alpha}\setminus \cIc)\up}}_{\gamma_2}
\end{eqnarray*}
Since $\den{\alpha}_e \subseteq \cI_c = (\den{\beta}\cup \non{\den{\beta}})_c = \den{\beta}_c \cup \non{\den\beta}_c$ we can rewrite $\gamma_1$ as follows:
\begin{eqnarray*}
\gamma_1 & = & \den{\alpha}_e \cap \big( \den{\beta}_c \cup \non{\den\beta}_c  \big) \cap \ \non{(\den{\beta} \setminus \cIc)\up} \nonumber \\
& = & \den{\alpha}_e \cap \underbrace{\den{\beta}_c \cap \ \non{(\den{\beta} \setminus \cIc)\up}}_{\den{\beta}_e} \ \cup \ \den{\alpha}_e \cap \non{\den{\beta}}_c \cap \ \non{(\den{\beta} \setminus \cIc)\up} \nonumber \\
& = & \den{\alpha}_e \cap \den{\beta}_e \ \cup \ \den{\alpha}_e \cap \non{\den{\beta}}_c \cap \ \non{(\den{\beta} \setminus \cIc)\up}
 \end{eqnarray*}
Now, we will prove that $\non{\den{\beta}}_c \subseteq \non{(\den{\beta} \setminus \cIc) \up}$ and so, we can remove the latter in $\gamma_1$. To this aim, we will show that $\non{\den{\beta}}_c \cap (\den{\beta} \setminus \cIc) \up =\emptyset$. First, note that $\non{\den{\beta}}_c \cap (\den{\beta} \setminus \cIc) \up = \non{\den{\beta}}_c \cap (\den{\beta} \setminus \cIc) \up_c$. Then $(\den{\beta} \setminus \cIc) \up_c = (\den{\beta} \cap \non{\cIc}) \up_c \subseteq \den{\beta}\up_c \cap\ \non{\cIc} \up_c \subseteq \den{\beta}\up_c = \den{\beta}_c$ where, in the last step, we have used Proposition~\ref{prop:totalc} (iii). Finally, as $\non{\den{\beta}}_c \cap \den{\beta}_c = \emptyset$, we conclude $\non{\den{\beta}}_c \cap (\den{\beta} \setminus \cIc) \up_c = \non{\den{\beta}}_c \cap (\den{\beta} \setminus \cIc) \up= \emptyset$ too.

Therefore, we can further simplify the expression we obtained for $\gamma_1$ as:
\begin{eqnarray*}
\gamma_1 & = & \den{\alpha}_e \cap \den{\beta}_e \ \cup \ \den{\alpha}_e \cap \non{\den{\beta}}_c \ = \ 
 \den{\alpha}_e \cap \den{\beta}_e \ \cup \ \big(\den{\alpha}_e \setminus \den{\beta}_c\big)
\end{eqnarray*}
Finally, making a similar reasoning for $\gamma_2$ we get $\gamma_2= \den{\alpha}_e \cap \den{\beta}_e \cup \big(\den{\beta}_e \setminus \den{\alpha}_c\big)$ and the result in the enunciate follows from $\den{\alpha\vee \beta}_e = \gamma_1 \cup \gamma_2$.
\end{proof}

\noindent In other words, equilibrium models of $\alpha \vee \beta$ consists of three possibilities: (1) common equilibrium models of $\alpha$ and $\beta$; (2) equilibrium models of $\alpha$ that are not classical models of $\beta$; and (3), vice versa, equilibrium models of $\beta$ that are not classical models of $\alpha$. Note that $\den{\alpha}_e \cap \den{\beta}_e \subseteq \den{\alpha \vee \beta}_e \subseteq \den{\alpha}_e \cup \den{\beta}_e$. As an example, consider the disjunction $p \vee (\neg p \rightarrow q)$ with $\alpha=p$ and $\beta=(\neg p \rightarrow q)$. The equilibrium models of each disjunct are $\den{p}_e=\{\str{20}\}$ and $\den{\neg p \rightarrow q}_e = \{\str{02}\}$, respectively. Obviously, $\alpha$ and $\beta$ have no common equilibrium model. Interpretation $\str{02}$ is an equilibrium model of $\beta$ and is not classical model of $\alpha$, and thus, it is an equilibrium model of $\alpha \vee \beta$. However, $\str{20}$ is both an equilibrium model of $\alpha$ and a classical model of $\beta$, and so it is disregarded. As a result, $\den{p \vee (\neg p \rightarrow q)}_e=\{\str{02}\}$.

As another example, take $r \vee (\neg p \rightarrow q)$. In this case, $\den{r}_e=\{\str{002}\}$ and $\den{\neg p \rightarrow q}_e=\{\str{020}\}$. Since each equilibrium model of one disjunct is not a classical model of the other disjunct, $\den{r \vee (\neg p \rightarrow q)}_e=\den{r}_e \cup \den{\neg p \rightarrow q}_e = \{\str{002},\str{020}\}$.
%%%%%%%%%%%%%%%%%%%%%%%%%%%%%%%%%%%%%%%%%%%%%%%%%%%%%%%%%%%%%%%%%%%%%%%%%%%%%%%%%%%%%

\section{Entailment relations}\label{sec:entail}

Logical entailment is usually defined by saying that the models of a formula (or a theory) are a subset of models of another formula (the entailed consequence). In our setting, we may consider different sets of models of a same formula $\alpha$: $\den\alpha$, $\den{\alpha}_c$ and $\den{\alpha}_e$. Therefore, it is not so strange that we can find different types of entailments for ASP in the literature. We summarize some of them in the following definition.

\begin{definition}
Given two formulas $\alpha, \beta$ we say that:

\begin{tabular}{rlcl}
$\alpha$ \emph{entails} $\beta$ (in $G_3$), & written $\alpha \models \beta$, & iff & $\den{\alpha} \subseteq \den{\beta}$\\
$\alpha$ \emph{classically entails} $\beta$, & written $\alpha \models_c \beta$, & iff & $\den{\alpha}_c \subseteq \den{\beta}_c$\\
$\alpha$ \emph{skeptically entails} $\beta$, & written $\alpha \models_{sk} \beta$, & iff & $\den{\alpha}_e \subseteq \den{\beta}_c$\\
$\alpha$ \emph{credulously entails} $\beta$, & written $\alpha \models_{cr} \beta$, & iff & $\den{\alpha}_e \cap \den{\beta}_c \neq \emptyset$\\
$\alpha$ \emph{weakly entails} $\beta$, & written $\alpha \models_{e} \beta$ & iff & $\den{\alpha}_{e} \subseteq \den{\beta}_{e}$\\
$\alpha$ \emph{strongly entails} $\beta$, & written $\alpha \models_{s} \beta$, & iff & for any formula $\gamma$, \\
& & & $\alpha \wedge \gamma \models_{e} \beta \wedge \gamma$, that is,\\
& & & $\den{\alpha \wedge \gamma}_e \subseteq \den{\beta \wedge \gamma}_e$ \qed
\end{tabular}

\end{definition}

The first two relations, $\models$ and $\models_c$, correspond to logical entailments in the monotonic logics of $G_3$ and classical propositional calculus, respectively. Obviously, $G_3$ entailment implies classical entailment (remember that $S_c=S\cap \cI_c$). The next two entailments, $\models_{sk}$ and $\models_{cr}$ are typically used for non-monotonic queries where $\alpha$ is assumed to be a program and $\beta$ some query in classical logic. In this way, $\beta$ is a skeptical (resp. credulous) consequence of $\alpha$ if any (resp. some) equilibrium model of $\alpha$ is a classical model of $\beta$. In~\cite{Pea06}, an  \emph{equilibrium entailment}, $\alpha \eqmodels \beta$, is defined as $\alpha \models_{sk} \beta$ when $\den\alpha \neq \cI$ and $\den{\alpha}_e \neq \emptyset$, and $\alpha \models_c \beta$ otherwise.

The direct entailment between two programs would correspond to $\models_e$ which we have called here \emph{weak entailment}. The idea is that $\alpha \models_e \beta$ means that the equilibrium models of program $\alpha$ are also equilibrium models of $\beta$. An operational reading of this entailment is that, in order to obtain equilibrium models for $\beta$, we can try solving $\alpha$ and, if a solution for the latter is found, it will also be a solution to the original program. If this same relation holds \emph{for any context} $\gamma$, i.e., we can replace $\beta$ by $\alpha$ inside some larger program and the solutions of the result are still solutions for the original program, then we talk about \emph{strong entailment}.

To the best of our knowledge, the strong entailment relation has not been studied in the literature although its induced equivalence relation, \emph{strong equivalence}~\cite{LPV01}, is well-known and was, in fact, one of the main motivations that originated the interest in ASP for $G_3$ and equilibrium logic. It is obvious that strong entailment implies weak entailment (it suffices with taking $\gamma=\top$). Using the previous entailment relations, we can define several equivalence relations by considering entailment in both directions. As a result, we get the following derived characterisations:

\begin{definition}
Given two formulas $\alpha, \beta$ we say that:

\begin{tabular}{rl@{\!}c@{\!}l}
$\alpha$ is \emph{equivalent} to $\beta$ (in $G_3$), & written $\alpha \equiv \beta$, & iff & $\den{\alpha}=\den{\beta}$\\
$\alpha$ is \emph{classically equivalent} to $\beta$, & written $\alpha \equiv_c \beta$, & iff & $\den{\alpha}_c = \den{\beta}_c$\\
$\alpha$ is \emph{weakly equivalent} to $\beta$, & written $\alpha \equiv_{e} \beta$ & iff & $\den{\alpha}_{e} = \den{\beta}_{e}$\\
$\alpha$ is \emph{strongly equivalent} to $\beta$, & written $\alpha \equiv_{s} \beta$, & iff & for any formula $\gamma$, \\
& & & $\den{\alpha \wedge \gamma}_e = \den{\beta \wedge \gamma}_e$. \qed
\end{tabular}
\end{definition}

Note how $\alpha \equiv_s \beta$ iff both $\alpha \models_s \beta$ and $\beta \models_s \alpha$. The following result is a rephrasing of the main theorem in~\cite{LPV01}.

\begin{theorem}[From Theorem~1 in~\cite{LPV01}]\label{th:se}
Two formulas $\alpha, \beta$ are strongly equivalent iff they are equivalent in $G_3$ (or HT). In other words: $\alpha \equiv_s \beta$ iff $\alpha \equiv \beta$.\qed
\end{theorem}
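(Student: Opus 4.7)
The plan is to prove the two directions separately. The direction ``$\alpha \equiv \beta$ implies $\alpha \equiv_s \beta$'' is immediate from compositionality: if $\den\alpha = \den\beta$, then $\den{\alpha \wedge \gamma} = \den\alpha \cap \den\gamma = \den\beta \cap \den\gamma = \den{\beta \wedge \gamma}$ for every context $\gamma$, and by Theorem~\ref{th:eq} the equilibrium sets coincide.

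For the converse I would argue by contrapositive. Assume $\den\alpha \neq \den\beta$; by symmetry fix some $v \in \den\alpha \setminus \den\beta$, and set $H \eqdef \{p \in \Sigma : v(p)=2\}$ and $T \eqdef \{p \in \Sigma : v(p) \neq 0\}$, so $H \subseteq T$. The core construction is a separating context whose denotation is pinned down to $\{v, v_t\}$ (or to $\{v\}$ when $v$ is already classical):
\[
\gamma_v \ \eqdef \ \bigwedge_{p \in H} p \ \wedge \ \bigwedge_{p \notin T} \neg p \ \wedge \ \bigwedge_{p \in T \setminus H} \neg\neg p \ \wedge \ \bigwedge_{p,q \in T \setminus H} (p \leftrightarrow q).
\]
A direct computation from Definition~\ref{def:val} shows $\den{\gamma_v} = \{v, v_t\}$ when $v \neq v_t$, and $\den{\gamma_v} = \{v\}$ when $v = v_t$: the first two conjuncts fix the values of atoms in $H$ and outside $T$, the third forces every atom in $T \setminus H$ to take a nonzero value, and the biconditionals force those atoms to share a common value (a common $1$ gives $v$; a common $2$ gives $v_t$).

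I would then finish with a small case split. (a) If $v$ is classical, $\den{\alpha \wedge \gamma_v} = \{v\}$ while $\den{\beta \wedge \gamma_v} = \emptyset$; the equilibrium sets are $\{v\}$ and $\emptyset$. (b) If $v$ is non-classical but $v_t \notin \den\beta$, then by total-closedness of $\den\alpha$ (Corollary~\ref{cor:vt1}) the interpretation $v_t$ itself is a classical element of $\den\alpha \setminus \den\beta$, reducing to case (a). (c) If $v$ is non-classical and $v_t \in \den\beta$, then $\den{\alpha \wedge \gamma_v} = \{v, v_t\}$ has no equilibrium model, as the unique classical element $v_t$ is dominated by $v < v_t$; whereas $\den{\beta \wedge \gamma_v} = \{v_t\}$ has $v_t$ as its only equilibrium model. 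In every case the equilibrium sets differ for $\gamma = \gamma_v$, so $\alpha \not\equiv_s \beta$, completing the contrapositive.

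The main obstacle I expect is the careful verification that $\den{\gamma_v}$ is exactly $\{v, v_t\}$, with special attention to the edge cases where $T \setminus H$ is empty (then the third and fourth conjuncts are vacuously $\top$, and one correctly recovers $\den{\gamma_v} = \{v\} = \{v_t\}$) or a singleton (the biconditional block vanishes, but $\neg\neg p$ still rules out the value $0$). Once $\gamma_v$ is in place, the case analysis is a routine application of the denotational characterisation of equilibrium models from Theorem~\ref{th:eq} together with total-closedness.
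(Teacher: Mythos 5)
Your proof is correct, but it takes a genuinely different route from the paper. The paper obtains Theorem~\ref{th:se} as an almost immediate corollary of Theorem~\ref{th:sentail}: writing $\alpha \equiv_s \beta$ as the conjunction of both strong entailments, it gets the three conditions $\den{\alpha}_c=\den{\beta}_c$, $\den{\alpha}_c \down \cap \den{\beta} \subseteq \den{\alpha}$ and $\den{\beta}_c \down \cap \den{\alpha} \subseteq \den{\beta}$, and then collapses them to $\den\alpha=\den\beta$ using $\den\beta \subseteq \den{\beta}_c\down$ (total-closedness). All the combinatorial work is thus delegated to the necessity direction of Theorem~\ref{th:sentail}, whose separating context is $\gamma_u \wedge \bigwedge_{u(p)=u(q)=1} (p\rightarrow q)$, a context that admits many models but controls minimality below $u_t$. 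You instead give a self-contained contrapositive argument with a stronger ``pinning'' context $\gamma_v$ satisfying $\den{\gamma_v}=\{v,v_t\}$, which makes the final case analysis (using total-closedness of $\den\alpha$ to handle the case $v_t\in\den\beta$ versus $v_t\notin\den\beta$) essentially trivial; this is in spirit the original argument of Lifschitz, Pearce and Valverde. What each buys: the paper's version reuses the harder entailment characterisation and illustrates the set-algebraic style of the denotational semantics, while yours is more elementary and independent of Theorem~\ref{th:sentail} --- at the cost of duplicating the context-construction work that the paper has already done once, and of not yielding the finer one-directional characterisation of $\models_s$ along the way. Your edge-case discussion ($T\setminus H$ empty or a singleton) and the verification of $\den{\gamma_v}$ are the only delicate points, and you have handled them correctly.
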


It is, therefore, natural to wonder whether this relation also holds for entailment, that is, whether strong entailment $\alpha \models_s \beta$ also corresponds to entailment\footnote{As a matter of fact, other authors~\cite{DSTW08,SL14} have implicitly or explicitly used HT entailment (i.e. our $G_3$ relation $\models$) as one of the two directions of strong equivalence without considering that there could exist a difference between $\models$ and $\models_s$ as we defined here.} in $G_3$, $\alpha \models \beta$.  However, it is easy to see that these two relations \emph{are different}. As a counterexample, let $\alpha=(p \vee q)$ and $\beta=(\neg p \rightarrow q)$ from Example~\ref{ex:1}. We can easily check that $\alpha \models \beta$: indeed, $\den{\alpha}=\{\str{20},\str{02},\str{21},\str{12},\str{22}\}$ $\subseteq \den{\beta}$ as we saw in Example~\ref{ex:1}. However, the interpretation $\str{20}$ ($p$ true and $q$ false) is an equilibrium model of $\alpha$ which is not equilibrium model of $\beta$. Thus, $\alpha \not\models_e \beta$ and so $\alpha \not\models_s \beta$ either, since weak entailment is obviously a necessary condition for strong entailment.

Fortunately, strong entailment can be compactly captured using the denotational semantics, as we prove next. We begin proving an auxiliary result. 
\begin{lemma}
\label{one_aux_lemma} \zlabel{one_aux_lemma}
Given any $v \in \cI$, let $\gamma_v$ be the formula: $$\gamma_{v} \eqdef \bigwedge_{v(p)=2} p$$ 
Then, for any formula $\alpha$ and any $v \in \den{\alpha}_c$, we have $v \in \den{\alpha \wedge \gamma_v}_e$.\qed
\end{lemma}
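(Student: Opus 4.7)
The natural approach is to invoke Theorem~\ref{th:1eq}, which reduces the claim that $v$ is an equilibrium model of $\alpha \wedge \gamma_v$ to verifying the fixpoint equation
\[
\den{\alpha \wedge \gamma_v} \cap \{v\}\!\down \ = \ \{v\}.
\]
Since $\den{\alpha \wedge \gamma_v} = \den{\alpha} \cap \den{\gamma_v}$ and $\den{\gamma_v} = \{u \in \cI : u(p) = 2 \text{ for every atom } p \text{ with } v(p) = 2\}$, this becomes a purely combinatorial claim about interpretations that sit below $v$.

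The $\supseteq$ direction is immediate: by hypothesis $v \in \den{\alpha}_c \subseteq \den{\alpha}$, the interpretation $v$ trivially satisfies $\gamma_v$, and $v \leq v$, so $v$ lies in the intersection.

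For $\subseteq$, the plan is to take an arbitrary $u \in \den{\alpha} \cap \den{\gamma_v} \cap \{v\}\!\down$ and show $u = v$ by a case analysis on atoms, exploiting that $v$ is classical (so $v(p) \in \{0,2\}$ for every $p$). If $v(p) = 0$, then $u \leq v$ forces $u(p) \leq 0$ and hence $u(p) = 0$. If $v(p) = 2$, then membership of $u$ in $\den{\gamma_v}$ forces $u(p) = 2$. Thus $u$ agrees with $v$ on every atom, giving $u = v$.

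There is no real obstacle here: the role of $\gamma_v$ is precisely to pin down the ``true'' part of $v$ so that no strictly smaller interpretation can remain a model, and the rest is bookkeeping against the definitions of $\leq$ and $\down$. Once the fixpoint equation is verified, Theorem~\ref{th:1eq} delivers $v \in \den{\alpha \wedge \gamma_v}_e$ as desired.
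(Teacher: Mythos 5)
Your proof is correct and follows what is essentially the canonical route: reduce to the fixpoint condition of Theorem~\ref{th:1eq} (equivalently, to $\leq$-minimality among models) and observe that $\gamma_v$ freezes the atoms true in $v$ while $u\leq v$ freezes the atoms false in $v$, so no model of $\alpha\wedge\gamma_v$ strictly below $v$ can exist. This matches the paper's argument; no gaps.
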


\begin{theorem}\label{th:sentail}
$\alpha \models_s \beta$ iff the following two conditions hold:
\begin{enumerate}
\item[(i)] $\alpha \models_c \beta$
\item[(ii)] $ \den{\alpha}_c \down \cap \, \den{\beta} \subseteq  \den{\alpha}$ 
\end{enumerate}
\end{theorem}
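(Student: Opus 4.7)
The plan is to prove the two directions separately, with the forward direction being substantially harder because it requires constructing a witness context.

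For the backward direction $(\Leftarrow)$, I would take an arbitrary $\gamma$ and $v \in \den{\alpha \wedge \gamma}_e$ and verify $v \in \den{\beta \wedge \gamma}_e$ directly. Since $v$ is a classical model of both $\alpha$ and $\gamma$, condition (i) yields $v \in \den{\beta}_c$, hence $v \in \den{\beta \wedge \gamma}_c$. For $\leq$-minimality, suppose some $w < v$ lies in $\den{\beta \wedge \gamma}$; because $v \in \den{\alpha}_c$ and $w \leq v$, we have $w \in \den{\alpha}_c \down$, so $w \in \den{\beta}$ combined with condition (ii) forces $w \in \den{\alpha}$. Then $w \in \den{\alpha \wedge \gamma}$ would contradict $v \in \den{\alpha \wedge \gamma}_e$.

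For the forward direction $(\Rightarrow)$, condition (i) follows almost immediately from Lemma~\ref{one_aux_lemma}: for any $v \in \den{\alpha}_c$ we obtain $v \in \den{\alpha \wedge \gamma_v}_e$, so strong entailment pushes $v$ into $\den{\beta \wedge \gamma_v}_e \subseteq \den{\beta}_c$. Condition (ii) I would prove by contradiction: assume some $u \in \den{\alpha}_c \down \cap \den{\beta}$ is not in $\den{\alpha}$. By Corollary~\ref{cor:vt}, $u_t \in \den{\alpha}_c$, and since $u \notin \den{\alpha}$ we must have $u < u_t$ strictly. The aim is to exhibit a context $\gamma$ for which $u_t$ becomes an equilibrium model of $\alpha \wedge \gamma$ but not of $\beta \wedge \gamma$, contradicting $\alpha \models_s \beta$.

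The main obstacle is choosing this witness $\gamma$: it must include $u$ in its denotation (so that $u < u_t$ witnesses the failure of minimality inside $\beta \wedge \gamma$) while being restrictive enough that no other interpretation strictly below $u_t$ survives inside $\alpha \wedge \gamma$. The sharpest choice is a formula $\gamma$ with $\den{\gamma} = \{u, u_t\}$. Such a formula exists because $\{u, u_t\}$ is total-closed (its only non-classical element $u$ has $u_t$ in the set, and $(u_t)_t = u_t$), and the representation result recalled after Corollary~\ref{cor:vt1} guarantees that every total-closed set is the denotation of some formula. With this $\gamma$: $u_t$ is classical in both $\den{\alpha}$ and $\den{\gamma}$, and the only candidate strictly below $u_t$ in $\den{\gamma}$ is $u$, which is excluded from $\den{\alpha}$ by assumption, so $u_t \in \den{\alpha \wedge \gamma}_e$. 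Conversely, $u \in \den{\beta} \cap \den{\gamma} = \den{\beta \wedge \gamma}$ sits strictly below $u_t$, so $u_t \notin \den{\beta \wedge \gamma}_e$, closing the contradiction.
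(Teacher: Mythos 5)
Your proof is correct, and while the easy direction and condition (i) match the paper's argument exactly, your witness context for the necessity of (ii) takes a genuinely different route. The paper builds an explicit formula $\gamma := \gamma_u \wedge \bigwedge_{u(p)=u(q)=1} (p \rightarrow q)$ (the logic program from the strong-equivalence proof of Lifschitz, Pearce and Valverde) and must then argue somewhat delicately that no $w$ with $u < w < u_t$ survives in $\den{\alpha \wedge \gamma}$: it shows any such $w$ would have to satisfy $w \geq u$ because of $\gamma_u$, and then the added implications $p \rightarrow q$ for undefined atoms rule $w$ out. You instead invoke the representation result recalled after Corollary~\ref{cor:vt1} (Theorem~2 of the cited Cabalar--Ferraris paper) to get a formula $\gamma$ with $\den{\gamma} = \{u, u_t\}$ exactly; your check that this set is total-closed is right, and with this sharpest possible context the minimality argument for $u_t$ in $\den{\alpha \wedge \gamma}$ becomes a one-liner, since $u$ is the only element of $\den{\gamma}$ strictly below $u_t$ and it is excluded from $\den{\alpha}$ by hypothesis. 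What you lose is the explicitness and syntactic simplicity of the counterexample: the paper's $\gamma$ is a concrete conjunction of atoms and atomic rules, which matters for the ASP reading of the result (the context separating $\alpha$ from $\beta$ can always be chosen to be a very simple program), whereas your $\gamma$ exists only via a nonconstructive appeal to an external theorem. What you gain is a shorter and more conceptual argument that isolates the only property of the context that is actually needed, namely $\den{\gamma} = \{u, u_t\}$.
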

\begin{proof}
We are going to start proving that the two conditions are sufficient for $\alpha \models_{s} \beta$. Let us take any formula $\gamma$ and any $v \in \den{\alpha \wedge \gamma}_e$. Then both $v \in \den{\gamma}_c$ and $v \in \den{\alpha}_c \subseteq \den{\beta}_c$. Thus, $v \in \den{\beta \wedge \gamma}_c$. Suppose we had some $u < v$, $u \in \den{\beta \wedge \gamma}$. Then, $u \in \den{\alpha}_c \down$ because $u < v \in \den{\alpha}_c$. But then, $u \in \den{\alpha}_c \down \cap \, \den{\beta} \subseteq  \den{\alpha}$, and as $u \in \den{\gamma}$ too, we would get that $v$ is not in equilibrium for $\alpha \wedge \gamma$, reaching a contradiction.

For proving that the two conditions are necessary, suppose $\alpha \models_s \beta$. For (i), take $v \in \den{\alpha}_c$. Since $v \in \den{\alpha \wedge \gamma_v}_e$ because of Lemma \ref{one_aux_lemma}, it follows that $v \in \den{\beta \wedge \gamma_v}_e \subseteq \den{\beta}_c$. 

For (ii), take some $u \in \den{\alpha}_c \down \cap \, \den\beta$ and assume $u \not \in \den{\alpha}$. Since $u \in\den{\alpha}_c \down$ and $u \not\in\den\alpha$, we conclude $u_t \in \den{\alpha}_c$ and $u<u_t$. Consider the formula:
$$\gamma:= \gamma_{u} \wedge \bigwedge_{u(p)=u(q)=1} p \rightarrow q$$
\noindent so that, obviously, $u \in \den\gamma$. We are going to show that $u_t \in \den{\alpha \wedge \gamma}_e$ but $u_t \not \in \den{\beta \wedge \gamma}_e$ something that contradicts strong entailment. We begin observing that $u_t \not \in \den{\beta \wedge \gamma}_e$ because $u \in \den\gamma \cap \den\beta = \den{\gamma \wedge \beta}$ but $u<u_t$ so $u_t$ is not in equilibrium. Now, to show $u_t \in \den{\alpha \wedge \gamma}_e$, it is easy to see that $u_t \in \den{\alpha \wedge \gamma}_c$, since we had $u_t \in \den{\alpha}_c$ and $u \in \den\gamma$ implies $u_t\in \den{\gamma}_c$. To see that $u_t$ is in equilibrium, take any  $w \in \den{\alpha \wedge \gamma}$ such that $w < u_t$. Now, notice that $w_t=u_t$, but $w \in \den\gamma \subseteq \den{\gamma_u}$ and, thus, the only possibility is that $w \geq u$. Moreover $w > u$ because $w\in\den\alpha$ while $u \not\in \den\alpha$. From $u<w<u_t(=w_t)$ we conclude that there exists some atom $p$, $u(p)=1$ and $w(p)=2$, and some atom $q$, $w(q)=1$ and $w_t(q)=u_t(q)=2$. But then, $w(q)=1$ implies $u(q)=1$ too and we get $u(p)=u(q)=1$ so that implication $p \rightarrow q$ occurs in the conjunction in $\gamma$. However, $w(p)=2$ and $w(q)=1$ means that $w$ is not a model of $p\rightarrow q$, which contradicts the assumption $w \in\den{\alpha \wedge \gamma}$.
\end{proof}

The proof to show that (ii) is a necessary condition for strong entailment relies on showing that, if it does not hold, we can build a formula $\gamma$ (a logic program) for which $\alpha \wedge \gamma \not\models_e \beta \wedge \gamma$. In fact, this part of the proof is not new: it reproduces the logic program built in the proof for Theorem 1 in~\cite{LPV01} for strong equivalence. However, \cite{LPV01} did not explicitly consider the concept of strong entailment, nor its characterisation in terms of sets of models, as provided here in Theorem~\ref{th:sentail}. 

Once Theorem~\ref{th:sentail} is separated as an independent result, we can easily provide an immediate proof of Theorem~\ref{th:se}. Combining both entailment directions of $\alpha \equiv_s \beta$ amounts now to satisfying the three conditions: 
\begin{enumerate}
\item[(i)] $\den{\alpha}_c=\den{\beta}_c$
\item[(ii)] $\den{\alpha}_c \down \cap \, \den{\beta} \subseteq  \den{\alpha}$ 
\item[(iii)] $\den{\beta}_c \down \cap \, \den{\alpha} \subseteq  \den{\beta}$
\end{enumerate}
\noindent but as $\den\beta \subseteq \den{\beta}_c \down = \den{\alpha}_c \down$ and $\den\alpha \subseteq \den{\alpha}_c \down = \den{\beta}_c \down$ we eventually get: (i) $\den{\alpha}_c=\den{\beta}_c$; (ii) $\den{\beta} \subseteq  \den{\alpha}$; and (iii) $\den{\alpha} \subseteq  \den{\beta}$. But these, altogether, are equivalent to $\den\alpha = \den\beta$.

To conclude this section, we consider an application of Theorem~\ref{th:sentail}, providing a sufficient condition for strong entailment that may be useful in some cases. Suppose that, apart from condition (i) of Theorem~\ref{th:sentail}, we further had $\beta \models \alpha$. Then, condition (ii) would become trivial since $\den{\alpha}_c \down \cap \, \den{\beta} \subseteq  \den{\beta}$ and $\den{\beta} \subseteq \den\alpha$. Therefore:

\begin{corollary}\label{cor:ex}
If $\alpha \models_c \beta$ and $\beta \models \alpha$ then $\alpha \models_s \beta$.\qed
\end{corollary}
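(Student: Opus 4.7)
The plan is to apply Theorem~\ref{th:sentail} directly, showing that the two hypotheses of the corollary imply the two conditions that characterise strong entailment. Condition (i) of Theorem~\ref{th:sentail} is literally the first hypothesis $\alpha \models_c \beta$, so there is nothing to prove there; the entire work lies in verifying condition (ii), namely $\den{\alpha}_c\down \cap \den{\beta} \subseteq \den{\alpha}$.

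For condition (ii), I would observe that the second hypothesis $\beta \models \alpha$ unfolds, by the definition of $G_3$-entailment, to the set inclusion $\den{\beta} \subseteq \den{\alpha}$. Then intersecting both sides with $\den{\alpha}_c\down$ (an operation that preserves inclusions) and using $A \cap B \subseteq B$ on the left-hand side gives the chain
\[
\den{\alpha}_c\down \cap \den{\beta} \ \subseteq \ \den{\beta} \ \subseteq \ \den{\alpha},
\]
which is precisely condition (ii). With both conditions verified, Theorem~\ref{th:sentail} yields $\alpha \models_s \beta$, completing the proof.

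There is essentially no obstacle: the argument is a two-line set-theoretic manipulation on top of the characterisation already established in Theorem~\ref{th:sentail}. The only thing worth noting is that the hypothesis $\beta \models \alpha$ (rather than $\alpha \models \beta$) is the ``right'' direction here, because it is $\den{\beta}$ that must be shown to sit inside $\den{\alpha}$ in condition (ii); this asymmetry is what makes the sufficient condition go through so cleanly.
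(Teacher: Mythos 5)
Your proof is correct and follows exactly the paper's argument: condition (i) of Theorem~\ref{th:sentail} is the first hypothesis verbatim, and condition (ii) follows from the chain $\den{\alpha}_c\down \cap \den{\beta} \subseteq \den{\beta} \subseteq \den{\alpha}$ given $\beta \models \alpha$. Nothing to add.
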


As an example, suppose we have a program $\Pi=\beta \wedge \gamma$ containing the disjunction $\beta=p \vee q$, typically used, for instance, to generate a choice between $p$ and $q$ in ASP. This formula is classically equivalent to $\alpha=(\neg p \rightarrow q) \wedge (\neg q \rightarrow p)$ which is also a common way for generating choices in ASP that does not use disjunction. Unfortunately, it is well-known that, in the general case $\alpha$ and $\beta$ are not strongly equivalent. For instance, if $\Pi=\beta \wedge (p\rightarrow q) \wedge (q \rightarrow p)$ we get the equilibrium model $\str{22}$ ($p$ and $q$ true) whereas for $\Pi'=\alpha \wedge (p\rightarrow q) \wedge (q \rightarrow p)$ we get no equilibrium model. However, $\beta \models \alpha$ in $G_3$ and, by Corollary~\ref{cor:ex}, if we replace $\beta$ by $\alpha$ in $\Pi$, any equilibrium (or stable) model we obtain in the new program will also be an equilibrium model of the original one (although, perhaps, we may lose equilibrium models with the replacement). Moreover, we can also replace $\beta=p \vee q$ by $\alpha'=\neg p \rightarrow q$ or by $\alpha''=\neg q \rightarrow p$ and the same property will still hold.

%%%%%%%%%%%%%%%%%%%%%%%%%%%%%%%%%%%%%%%%%%%%%%%%%%%%%%%%%%%%%%%%%%%%%%%%%%%%%%
\section{Conclusions}\label{sec:conc}

We have introduced an alternative formulation of equilibrium models and its monotonic basis, Here-and-There (or, more precisely, G\"odel's three-valued logic $G_3$) that assigns a set of models (called a denotation) to each formula. This semantics, the main contribution of the paper, allows describing $G_3$, classical and equilibrium models using several compact set operations. Using denotations, we have proved again some already known fundamental results for $G_3$ or Equilibrium Logic to show that much textual effort usually done in the literature can be rephrased in terms of formal equivalences on sets of interpretations that, in many cases, even amount to simple properties from standard set theory. On the other hand, as side contributions or applications of this semantics, we have also obtained some additional fundamental results. For instance, we have proved that, while disjunction in $G_3$ is definable in terms of the other connectives, conjunction is a basic operation and cannot be derived from disjunction and implication. We have also shown that the equilibrium models of a disjunction can be obtained in a compositional way, in terms of the equilibrium and classical models of the disjuncts. Finally, we have defined (and characterised in denotational terms) a new type of entailment we called \emph{strong entailment}: a formula strongly entails another formula if the latter can be replaced by the former in any context while keeping a subset of the original equilibrium models.

A recent outcome of our current work is~\cite{CMMSE15} focused on the formulation of the denotational semantics using a theorem prover so that most of the meta-theorems for Equilibrium Logic and $G_3$ in this paper have been automatically checked using the PVS theorem prover~\cite{PVS92}. Future work includes the reformulation in denotational terms of different classes of models that are known to charaterise syntactic subclasses of logic programming~\cite{FINK11,FINK13} and the extension to the infinitary and first order versions of Equilibrium Logic. Finally, it would also be interesting to explore how the new definition of strong entailment can be applied in belief update or even inductive learning for ASP.

\newpage
\bibliographystyle{acmtrans}
\bibliography{refs}

\end{document}